\tikzstyle{start} = [rectangle, rounded corners, minimum width=0.85\columnwidth, minimum height=1cm, text centered, draw=black, thick, fill=orange!60]
\tikzstyle{decision}  = [diamond, shape aspect=2.2, minimum width=2.6cm, minimum height=0.6cm, inner xsep=0,text centered, thick, draw=black, fill=green!40]
\tikzstyle{proc} = [rectangle, rounded corners, minimum width=0.35\columnwidth, minimum height=1cm, text centered, draw=black, thick, fill=orange!60]
\tikzstyle{proc2} = [rectangle, rounded corners, thick, minimum width=0.4\columnwidth, minimum height=1cm, text centered, draw=black, fill=orange!60]
\tikzstyle{end} = [rectangle, rounded corners, minimum width=0.35\columnwidth, minimum height=1cm, text centered, draw=black, thick, fill=blue!40]
\tikzstyle{arrow}     = [thick, ->, >=stealth]
\newtheorem{assumption}{Assumption}
\newtheorem{proposition}{Proposition}
\newtheorem{remark}{Remark}
\newtheorem{theorem}{Theorem}
\newtheorem{lemma}{Lemma}
\newcommand{\R}{\ensuremath{\mathbb{R}}}
\DeclarePairedDelimiterX{\norm}[1]{\lVert}{\rVert}{#1}
\newcommand{\CASE}[1]{\STATE \textbf{case} #1\textbf{:} \begin{ALC@g}}
	\newcommand{\ENDCASE}{\end{ALC@g}}
\newcommand{\DEFAULT}{\STATE \textbf{default:} \begin{ALC@g}}
	\newcommand{\ENDDEFAULT}{\end{ALC@g}}
\newcommand{\DEFAULTLINEcompareCLFCBF}[1]{\STATE \textbf{default:}}
\begin{document}
	\title{On Estimating the Probabilistic Region of Attraction for Partially Unknown Nonlinear Systems: An Sum-of-Squares Approach}
	\author{Hejun Huang\aref{cuhk1}, Dongkun Han\aref{cuhk2}}
	
	\affiliation[cuhk1]{Department of Mechanical and Automation Engineering, The Chinese University of Hong Kong, HKSAR, China
		\email{hjhuang@mae.cuhk.edu.hk}}
	\affiliation[cuhk2]{Department of Mechanical and Automation Engineering, The Chinese University of Hong Kong, HKSAR, China
		\email{dkhan@mae.cuhk.edu.hk}}
	
	\maketitle
	
	\begin{abstract}
		Estimating the region of attraction for partially unknown nonlinear systems is a challenging issue. In this paper, we propose a tractable method to generate an estimated region of attraction with probability bounds, by searching an optimal polynomial barrier function. Techniques of Chebyshev interpolants, Gaussian processes and sum-of-squares programmings are used in this paper. To approximate the unknown non-polynomial dynamics, a polynomial mean function of Gaussian processes model is computed to represent the exact dynamics based on the Chebyshev interpolants. Furthermore, probabilistic conditions are given such that all the estimates are located in certain probability bounds. Numerical examples are provided to demonstrate the effectiveness of the proposed method.
	\end{abstract}
	
	\keywords{Region of attraction, Sum-of-squares programming, Chebyshev interpolants, Gaussian processes.}
	
	
	\section{Introduction}
	
	Tracking the performance of uncertain nonlinear systems is an essential problem of significant research interests. In engineering practices, the concepts of safety and stability are central to these uncertain nonlinear systems in most scenarios, e.g., flight dynamics, bipedal robotics and power systems \cite{glassman2012region, wang2018safe, hsu2015control, 6171041}. Consequently, scientists across multiple disciplines have summarized this type of problem into the analysis of region of attraction (ROA, also called domain of attraction). Estimating the ROA can directly obtain the safety or the stability margin in many practical implementations \cite{chesi2011domain}.
	
	Fruitful results have been obtained to estimate the ROA of fixed nonlinear systems. Exploiting the sublevel set of Lyapunov functions is one of the useful methods \cite{zubov1964methods}. Barrier functions are another powerful tool to guarantee safety such that states avoid entering a specified unsafe region \cite{blanchini1999set}, \cite{ames2019control}. To meet the safety and the stability conditions simultaneously, the quadratic programs have been investigated to find qualified Lyapunov barrier functions \cite{ames2016control}. Meanwhile, sum-of-squares programmings (SOSPs) are proposed to find more permissive results in polynomial systems \cite{parrilo2000structured,manchester2011regions,el2017estimating,wang2018permissive}. However, in many engineering practices, there exist model inaccuracies and unknown disturbances that might influence the system dynamics or even result in operation failures in the worst case.
	
	To handle with the above issue, effective methods have been proposed regarding partially unknown systems. First, Lyapunov-based methods are developed and extended to uncertain systems, where a Lyapunov certified ROA (LCROA) estimation is conducted for uncertain, polynomial systems \cite{han2016cdc}, \cite{iannelli2018equilibrium}. Furthermore, learning-based methods have been studied in the literature. Among these, the use of Gaussian processes (GP) is shown to be a promising approach to quantify the uncertainty in the stochastic process \cite{vinogradska2017stability,umlauft2018uncertainty}. The idea of unifying SOSP and GP naturally arises to compute the LCROA \cite{berkenkamp2016safe,umlauft2017learning,devonport2020bayesian}. 
	
	Inspired by the work in \cite{jagtap2020control} that firstly uses GP to compute the barrier certified ROA (BCROA), based on our previous work in \cite{wang2018permissive, han2016cdc2}, this paper combines SOSPs and GP to estimate the optimal BCROA of partially unknown nonlinear systems. Different from the learned polynomial systems in \cite{devonport2020bayesian}, we use Chebyshev interpolants and polynomial mean functions of GP models to find the optimal BCROA rather than LCROA with relaxed assumptions. The polynomial mean function is proven to be more flexible to match other nonlinear kernels instead of polynomial kernel. To the best of our knowledge, this paper is the first to compute barrier functions via SOSPs in partially unknown nonlinear systems.
	
	The main contributions of this paper are threefold. First, a learned polynomial system is built with probability bounds. Second, a positive but safe sample policy is developed to prepare appropriate prior information for higher accuracy of the GP model. Third, a tractable method based on SOSP is proposed to compute the probabilistic optimal BCROA.	
	
	\section{Preliminary} \label{sec:prelimary}
	
	Consider an autonomous system as follows,
	\begin{equation}
		\label{eqn:sysstate}
		\dot{x} = \underbrace{f(x)+g(x)}_{Known}+\underbrace{d(x)}_{\textit{Unknown}},
	\end{equation}	
	\noindent where $x \in \mathcal{X} \subseteq \mathbb{R}^{n}$ denotes the state, $f, g:\mathbb{R}^{n}\rightarrow \mathbb{R}^{n}$ denote the polynomial and the non-polynomial term respectively, and $d:\mathbb{R}^{n}\rightarrow \mathbb{R}^{n}$ denotes the unknown term. All the terms in (\ref{eqn:sysstate}) are Lipschitz continuous.
	
	Without losing generality, systems with a single equilibrium are considered, and this equilibrium could be transformed to the origin via variables substitution \cite{Khalil:1173048}. 
	
	\begin{assumption}
		\label{asp:1}
		The origin ($x=0$) is a single stable equilibrium of (\ref{eqn:sysstate}), that is $f(0)=g(0)=d(0)=0$. $\hfill\square$
	\end{assumption}

	To model the system dynamics in a Bayesian framework as developed in \cite{3569}, we define a prior distribution of noise $[\epsilon_1,\epsilon_2,\dots, \epsilon_k]^{\mathrm{T}}$ over $k$ measurements $[\dot{x_1}, \dot{x_2}, \dots, \dot{x_k}]^{\mathrm{T}}$.

	\begin{assumption}
		The noise $[\epsilon_1,\epsilon_2,\dots, \epsilon_k]^{\mathrm{T}}$ over the system measurements $[\dot{x}_1,\dot{x}_2,\dots, \dot{x}_k]^{\mathrm{T}}$ of (\ref{eqn:sysstate}) is uniformly bound by $\sigma_n$, i.e., $\{\epsilon_i \sim\mathcal{N}(0,\sigma_{n}^2)\}_{i=1}^{k}$.  $\hfill\square$
	\end{assumption}

	The known dynamics $f(x)+g(x)$ can be computed directly. Thus, $d(x)$ can be obtained by subtracting $f(x)+g(x)$ from $\dot{x}$. In this work, we restrict our attention to the system (\ref{eqn:sysstate}) with bounded $d(x)$:

	\begin{assumption}
		The unknown term $d(x)$ in (\ref{eqn:sysstate}) exists a bounded norm in the reproducing kernel Hilbert space (RKHS), i.e., $\Vert d(x)\Vert\leq c_g$, where $c_g$ is a constant. $\hfill\square$
	\end{assumption}

	 As \cite{jagtap2020control} introduced, the RKHS is a Hilbert space of square integrable functions that contains functions of the form $l(x)=\sum_i \alpha_i k(x, x_i)$, where $\alpha_i\in\R$ denotes coefficient and $k:\mathcal{X}\times \mathcal{X}\rightarrow\mathbb{R}^+_0$ denotes a symmetric positive definite kernel function of states $x, x_i$. For more details about the RKHS norm, we kindly refer interested readers to \cite{paulsen2016introduction}.
	
	\subsection{Barrier Functions}\label{sec:Barrier function}
	
	To introduce the barrier function, let us first consider a simple autonomous system as follows,
	\begin{equation}
		\label{eqn:polynomialsystem}
		\begin{aligned}
			\dot{x}=f(x),
		\end{aligned}
	\end{equation}
	\noindent where $f(x)$ is locally Lipschitz continuous with a single stable equilibrium point at the origin. As we mentioned, the safety and stability of (\ref{eqn:polynomialsystem}) can be guaranteed by ROA. 
	
	State trajectories starting inside the barrier function certified ROA (BCROA) $\mathcal{L} =\{x\in\mathcal{X} \vert  h(x)\geq 0\}$ will never enter into the unsafe region $\mathcal{L}_U=\mathcal{X}\backslash\mathcal{L}$ as defined,
	
	\begin{equation} \label{Def:BarrierCertificate}
		\begin{aligned}
			& \forall x\in \mathcal{L} \;\;\;\; && h(x)\geq 0, \\
			& \forall x\in \mathcal{L} && \frac{\partial h}{\partial x}f(x)\geq 0,\\
			& \forall x\in \mathcal{L}_U && h(x) < 0. \\
		\end{aligned}
	\end{equation}
	
	\noindent Meanwhile, the Lyapunov function $V(x)$ certified ROA  (LCROA) $\mathcal{L}_V=\{x\in\mathcal{X}\, \vert V(x)\leq c\}$ denotes a sublevel set of $V(x)$. The state trajectories inside $\mathcal{L}_V$ will always converge to the origin. The following lemma demonstrates a relationship between $\mathcal{L}$ and $\mathcal{L}_V$.
	
	\begin{lemma}(Lemma 3.2 of \cite{wang2018permissive})
		\label{lem:permissive_bc}
		Given an autonomous dynamical system (\ref{eqn:polynomialsystem}) that is asymptotically stable at the origin, the estimate of ROA by barrier function is no smaller than the estimate by Lyapunov function, i.e., $\mathcal{L}_v\subseteq \mathcal{L}$. $\hfill\square$
	\end{lemma}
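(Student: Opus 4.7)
The plan is to exhibit an explicit barrier function built from the given Lyapunov function, and then argue that the resulting certified region is contained in the maximal barrier-certified region $\mathcal{L}$. Concretely, assume that $V(x)$ certifies $\mathcal{L}_V = \{x \in \mathcal{X}\,|\, V(x)\le c\}$ in the sense that $V(0)=0$, $V(x)>0$ for $x\neq 0$, and $\dot V(x) = \frac{\partial V}{\partial x} f(x) \le 0$ on $\mathcal{L}_V$. Define the candidate barrier function
\begin{equation*}
    h(x) := c - V(x).
\end{equation*}
The first step is simply to check that this $h$ satisfies the three defining conditions of a barrier certificate stated in (\ref{Def:BarrierCertificate}) with the set $\mathcal{L}_V$ playing the role of the safe region.

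For the first condition, $h(x)\ge 0$ is equivalent to $V(x)\le c$, which holds exactly on $\mathcal{L}_V$. For the third condition, the complement $\mathcal{X}\setminus \mathcal{L}_V$ is characterized by $V(x)>c$, i.e.\ $h(x)<0$. The Lie-derivative condition is the key one: since $V$ is a Lyapunov function asymptotically stabilizing the origin, $\frac{\partial V}{\partial x} f(x) \le 0$ on $\mathcal{L}_V$, and therefore
\begin{equation*}
    \frac{\partial h}{\partial x} f(x) = -\frac{\partial V}{\partial x} f(x) \ge 0 \quad \text{on } \mathcal{L}_V,
\end{equation*}
which matches (\ref{Def:BarrierCertificate}) exactly. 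Hence $h$ is a legitimate barrier certificate whose certified set is precisely $\mathcal{L}_V$.

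Finally, since $\mathcal{L}$ is the set certified by the barrier estimate (which is taken to be the largest, or at least the chosen optimal, super-zero-level set among valid barrier functions), and since $\mathcal{L}_V$ is the super-zero-level set of a valid barrier function by the construction above, the inclusion $\mathcal{L}_V \subseteq \mathcal{L}$ follows. I do not expect a serious obstacle here; the only delicate point is making sure that the Lyapunov negativity of $\dot V$ on $\mathcal{L}_V$ really does hold up to and including the boundary $V(x)=c$, so that the Lie-derivative condition on the closed set $\mathcal{L}_V$ is satisfied and $h=c-V$ qualifies as a barrier certificate throughout $\mathcal{L}_V$, not merely in its interior.
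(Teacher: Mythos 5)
The paper never proves this lemma itself --- it is imported verbatim as Lemma 3.2 of \cite{wang2018permissive}, with only a pointer to that reference --- so there is no in-text proof to compare against. Your argument is the standard one behind the cited result, and the verification that $h=c-V$ meets all three conditions of (\ref{Def:BarrierCertificate}) with certified set exactly $\mathcal{L}_V$ is correct (including your own caveat that $\dot V\le 0$ must hold on the closed sublevel set, not just its interior). The one step I would push back on is the final inference: the inclusion $\mathcal{L}_V\subseteq\mathcal{L}$ does not follow from $\mathcal{L}$ being ``optimal'' in the volume or $\mathrm{Tr}(Q)$ sense, because a volume-maximal barrier-certified set need not contain every other certified set --- two valid barrier functions can certify incomparable regions. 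What actually secures the inclusion in this paper's pipeline (and in \cite{wang2018permissive}) is the construction: the SOSP search in (\ref{bcalgo:step1})--(\ref{bcalgo:step3}) is \emph{initialized} with $h_0(x)=c^*-V(x)$, whose zero-superlevel set is precisely $\mathcal{L}_V$, and the subsequent enlargement is designed to produce a nested, growing sequence of superlevel sets. So your explicit candidate $h=c-V$ is exactly the right object, but the last sentence of your proof should appeal to this monotone initialization-and-growth argument rather than to abstract maximality of $\mathcal{L}$; as written, that step is the only genuine gap.
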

	
	\noindent The details about how to obtain a Lyapunov maximum sublevel $c^*$ and an optimal barrier function $h^*(x)$ are introduced \cite{wang2018permissive}. In Example 1, we will illustrate the comparison of LCROA and BCROA. 
	
	\textit{Example $1$}:~ Consider an autonomous system
	\begin{eqnarray}
		\label{eqn:example1}
		\begin{bmatrix} \dot{x}_1 \\ \dot{x}_2 \end{bmatrix} = 
		\begin{bmatrix}
			-x_{1}^3-x_{1}x_{2}^2\\	-x_{2}-x_{1}^2x_{2}
		\end{bmatrix}.
	\end{eqnarray}
	
	Let $\mathcal{A}_1$ denote the LCROA and $\mathcal{A}_2$ denote the BCROA. In (\ref{eqn:example1}), $\mathcal{A}_1 = \{x \, \vert \, V_1(x)\leq 1.063\}$ is established by a $4^{th}$ degree polynomial $V_1(x) =  x_{1}^{4} + x_{2}^{4}+x_{1}^{2}x_{2}^{2}+x_{1}^{2} + x_{2}^{2}+x_{1}x_{2}$, and $\mathcal{A}_2= \{x \, \vert \, h_1^*(x)\geq 0\}$ is obtained by another $4^{th}$ degree polynomial barrier function $h_1^*(x)$ based on $\mathcal{A}_1$.
	
	\begin{figure}[ht] 
		\centering
		\includegraphics[width=0.85\linewidth]{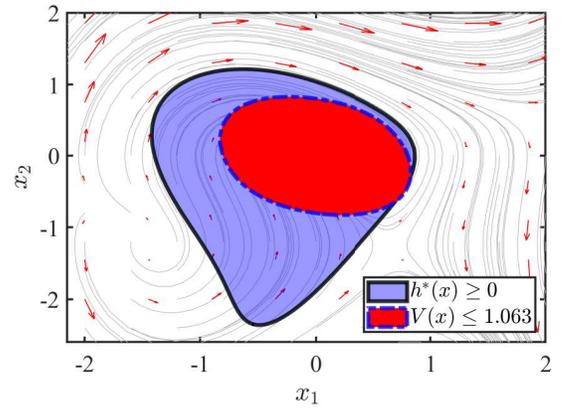}
		\caption{Comparison of $\mathcal{A}_1$ and $\mathcal{A}_2$. The red region enclosed by a dashed blue ellipse depicts $\mathcal{A}_1$, while the light blue region enclosed by the solid black line depicts $\mathcal{A}_2$. }
		\label{fig:fig1}
	\end{figure}
	
	In Figure \ref{fig:fig1}, $\mathcal{A}_{2}$ is significantly larger than $\mathcal{A}_{1}$, which means the former allows more states to be considered in (\ref{eqn:example1}). In this paper, we will leverage the superiority of the barrier function to obtain a permissively safe and stable guarantee. 
	
	\subsection{Gaussian Processes}\label{sec:Gaussian Processes}
	Gaussian processes (GP) provide a non-parametric regression method to capture the unknown dynamics in Bayesian framework. In the GP model, every variable is associated with a normal distribution, where the prior and the posterior GP model of these variables obey a joint Gaussian distribution \cite{3569}. A GP model is characterized by the mean function $m(x)$ and the covariance function $k(x, x^{\prime})$, where $k(x,x^{\prime})$ computes the similarity between any two states $x, \,x^{\prime}\in\mathcal{X}$. In most cases, the covariance function $k(x,x^{\prime})$ is also called a kernel. The measurements of $d(x)$ in (\ref{eqn:sysstate}) can be used to construct a GP model with polynomial mean functions as follows.
	
	\begin{proposition}
		\label{prop:polynomial mean function}
		Suppose there exist $k$ measurements of $d(x)$ in (\ref{eqn:sysstate}) that satisfy Assumption 2, 3. Then, the following GP model of $d(x)$ can be established with polynomial mean function $m(x_*)$ and covariance function $k(x, x_*)$ as,
		\begin{equation}
			\label{eqn:optimize_output_distribution_mean}
			\begin{aligned}
				m(x_*) &= \varphi(x_*)^{\mathrm{T}} w, \\
				k(x, x_*) &= k(x_*, x_*)-k_*^{\mathrm{T}}(K+\sigma_n^2I)^{-1}k_*.
			\end{aligned}
		\end{equation}	
		\noindent where $x_*$ is a query state, $\varphi(x_*)$ is a monomial vector, $w$ is a coefficient vector, $[K]_{(i,j)}=k(x_i, x_j)$ is a kernel Gramian matrix and $k_*=[k(x_1,x_*), k(x_2,x_*), \dots, k(x_w, x_*)]^{\mathrm{T}}$.
	\end{proposition}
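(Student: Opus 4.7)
The plan is to derive the two formulas in (\ref{eqn:optimize_output_distribution_mean}) by combining a Chebyshev polynomial fit (for the parametric mean) with the classical Gaussian process conditioning identity (for the posterior covariance). The structure mirrors the ``explicit basis functions'' treatment of GP regression, specialised to a monomial basis.

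First, I would place a GP prior on the unknown term $d$ with parametric polynomial mean $m_0(x)=\varphi(x)^{\mathrm{T}} w$ and positive definite kernel $k(\cdot,\cdot)$. Under Assumption 3, $d$ lies in an RKHS with bounded norm $c_g$, so on the compact set $\mathcal{X}$ standard approximation theory guarantees that a Chebyshev interpolant of sufficient degree reproduces $d$ to arbitrary tolerance. Fitting such an interpolant to the $k$ residual measurements $\{(x_i,\dot{x}_i - f(x_i) - g(x_i))\}_{i=1}^{k}$ yields concrete coefficients $w$, so the mean function $\varphi(x_*)^{\mathrm{T}} w$ appearing in (\ref{eqn:optimize_output_distribution_mean}) is well-defined.

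Second, I would derive the posterior covariance. Under Assumption 2 the noise vector is $\mathcal{N}(0,\sigma_n^{2}I)$, hence the stacked observation vector and the latent value $d(x_*)$ are jointly Gaussian with block covariance built from $K$, $k_*$, $k(x_*,x_*)$, and the noise variance. Applying the standard Gaussian conditioning identity gives
\begin{equation*}
\mathrm{Var}\bigl[d(x_*)\mid \dot{x}_1,\ldots,\dot{x}_k\bigr]
\;=\; k(x_*,x_*) - k_*^{\mathrm{T}}\bigl(K+\sigma_n^{2} I\bigr)^{-1} k_*,
\end{equation*}
which is exactly the expression claimed for $k(x,x_*)$; invertibility of $K+\sigma_n^{2}I$ follows from the positive semi-definiteness of $K$ together with $\sigma_n^{2}>0$.

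The main obstacle will be reconciling the parametric polynomial mean with the non-parametric posterior correction. In a vanilla GP the conditional mean is $\varphi(x_*)^{\mathrm{T}} w + k_*^{\mathrm{T}}(K+\sigma_n^{2} I)^{-1}(y-\Phi w)$, so obtaining the clean form $\varphi(x_*)^{\mathrm{T}} w$ requires either absorbing the kernel-weighted residual into a redefined coefficient vector or arguing it is negligible. I would close this gap by choosing the Chebyshev interpolation degree high enough that, combined with the RKHS bound in Assumption 3, the fitted residual $y-\Phi w$ falls within the noise budget $\sigma_n$; the correction term then contracts to a quantity captured by the probability bounds invoked later in the paper, preserving the polynomial form of $m(x_*)$.
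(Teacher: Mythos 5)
Your covariance derivation is fine and coincides with what the paper does: the joint Gaussianity of the noisy observations and $d(x_*)$ plus the standard conditioning identity gives $k(x_*,x_*)-k_*^{\mathrm{T}}(K+\sigma_n^2 I)^{-1}k_*$, and invertibility of $K+\sigma_n^2 I$ is immediate. The mean-function half, however, both departs from the paper's route and contains a genuine gap. The paper works entirely in the weight-space view: Mercer's theorem factors the RBF kernel as $k(x,x')=\phi(x)^{\mathrm{T}}\phi(x')$ with an infinite feature vector $\phi$, the unknown term is written as $d(x)=\phi(x)^{\mathrm{T}}w$, Bayes' rule and a MAP/ridge-regression computation yield $\hat{w}=\sigma_p^2\phi(x)^{\mathrm{T}}(\phi(x)\sigma_p^2\phi(x)^{\mathrm{T}}+\sigma_n^2)^{-1}y$, and the polynomial form is then obtained by truncating $\phi$ to the finite monomial vector $\varphi$. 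The coefficient vector $w$ in the proposition is this regression estimate; it is not a Chebyshev interpolant. In the paper, Chebyshev interpolation is applied only to the \emph{known} non-polynomial term $g(x)$ in Section 3.1 (its coefficients require the integrals in (\ref{eqn:ChebyAT}), which you cannot evaluate for the unknown $d$ given only $k$ noisy samples), so your first step already uses machinery that is unavailable for $d(x)$.

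The more serious problem is the one you flag yourself and then do not close. In the explicit-basis-function formulation the posterior mean is $\varphi(x_*)^{\mathrm{T}}w + k_*^{\mathrm{T}}(K+\sigma_n^2 I)^{-1}(y-\Phi w)$, and the correction term is a non-polynomial function of $x_*$ through $k_*$. Making the fitted residual $y-\Phi w$ small in norm (via a high interpolation degree and the RKHS bound of Assumption 3) shrinks this term but does not eliminate it, and in particular does not restore the claimed polynomial form of $m(x_*)$; "absorbing it into a redefined coefficient vector" is impossible precisely because it is not in the span of the monomials. Appealing to "the probability bounds invoked later in the paper" is circular here: Lemma \ref{lem:krause_inequality} bounds the deviation of the true function from the GP posterior mean, not the deviation of your truncated candidate mean from the actual posterior mean, so it cannot be used inside the proof of Proposition \ref{prop:polynomial mean function} to justify discarding the correction. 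To repair the argument you would need either the paper's truncation-of-the-feature-map construction (where the mean is polynomial by construction) or an explicit, quantitative bound on the discarded term that is then carried through the statement of the proposition.
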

	\begin{proof}
		See Appendix.
	\end{proof}
	
	\begin{remark} 
		Proposition 1 supports us to define the degree of polynomial $m(x_*)$ in (\ref{eqn:optimize_output_distribution_mean}). Besides, it establishes a feasible link between polynomial mean functions and other flexible kernels, including polynomial kernel \cite{devonport2020bayesian}.  $\hfill\square$
	\end{remark}
	
	\section{Learning the Partially Unknown System}\label{sec:EstimateDynamic}
	To illustrate our work about learning the partially unknown dynamics (\ref{eqn:sysstate}) in polynomial form, we propose the approximated probabilistic model in (\ref{eqn:gp_learned_system_2}), which combines Chebyshev interpolants and a GP model as shown in the first two subsections. We also introduce a covariance oriented safe sample policy based on GP to learn the dynamics consistently in the third subsection.
	

	\subsection{Chebyshev Interpolants}\label{sec:Chebyshev}
	Chebyshev interpolants provide a useful way to approximate a class of nonlinear functions with a bounded remainder \cite{trefethen2019approximation}. The term $g(x)$ in (\ref{eqn:sysstate}) can be approximated by the Chebyshev interpolants $P_k$ of degree $k$ in $[-1,1]$ as, 
	\begin{equation}
		\begin{aligned}
			g(x)\approx P_k(x)&=\sum_{i=0}^{k}\alpha_i \tau_i(x)=\left\langle A ,T\right\rangle, \quad i\in [0,k], \\
		\end{aligned}
		\label{eqn:Cheby}  
	\end{equation}
	\noindent where $i$ denotes the subscript of the coefficients vector $A =[\alpha_{0}, \alpha_{1}, \dots, \alpha_{k}]$ and the Chebyshev polynomials vector $T =[\tau_{0}, \tau_{1}, \dots, \tau_{k}]$ that satisfies,
	\begin{equation} \label{eqn:ChebyAT}
		\begin{aligned}
			\alpha_i(x) &= \left\{
			\begin{aligned}
				&\frac{1}{\pi} \int_{-1}^{1} \frac{{g(x)\tau_i(x)}}{{\sqrt{1-x^2}}} dx, \quad i=0,\\
				&\frac{2}{\pi} \int_{-1}^{1} \frac{{g(x)\tau_i(x)}}{{\sqrt{1-x^2}}} dx, \quad i\in[1,k], \\
			\end{aligned}
			\right.\\
			\tau_i(x)&=\cos(i\operatorname{arccos}(x)), \qquad\quad  i\in[0,k].\\
		\end{aligned}
	\end{equation}
	
	\noindent Note that, Chebyshev interpolants are applicable to any arbitrary interval $I=[a, b]$ by the following transformation,
	\begin{equation}
		I(x)=\frac{2x-(b+a)}{b-a}.
		\label{eqn:IntervalT}
	\end{equation} 
		
	
	We define the remainder $\xi= g(x)-P_k(x)$ based on (\ref{eqn:Cheby}). The following inequality from \cite{trefethen2019approximation} declares that the remainder $\xi$ of Chebyshev interpolants is bounded over the domain.

	\begin{lemma}(Theorem 8.2 of \cite{trefethen2019approximation})\label{lm:Approximation}
	Let an analytic function $g(x)$ in $[-1,1]$ be analytically containable to the open Bernstein ellipse $E$, where it satisfies $\vert g(x)\vert \leq m$ for some $m$. Then, its Chebyshev interpolants $P_k(x)$ satisfies
	{\vspace{0pt}\hspace{40pt}
		$
		\begin{aligned}
			\Vert g(x)-P_k(x) \Vert \leq \frac{4m  \rho^{-k}}{\rho -1},\quad k \geq 0.
		\end{aligned}$}
	\hfill$\square$
	\end{lemma}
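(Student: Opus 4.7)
The plan is to exploit the analyticity of $g$ on the Bernstein ellipse $E$ to show that the Chebyshev coefficients of $g$ decay geometrically, and then to bound the interpolation error by a truncation piece plus an aliasing piece, each of which contributes one copy of $2m\rho^{-k}/(\rho-1)$ so that their sum matches the factor $4$ in the stated bound.

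First, I would write the (unique) Chebyshev series $g(x)=\sum_{j=0}^{\infty}a_j\tau_j(x)$, which converges uniformly on $[-1,1]$ under the hypothesis. Using the Joukowski substitution $x=(z+z^{-1})/2$, which maps the circle $|z|=\rho$ to $E$ and satisfies $\tau_j(x)=(z^j+z^{-j})/2$, the coefficient formula in (\ref{eqn:ChebyAT}) can be recast as a contour integral of $g$ over $|z|=\rho$. Combining $|g|\le m$ on $E$ with the trivial bound $|z^{-j}|=\rho^{-j}$ along the contour gives the standard estimate $|a_j|\le 2m\rho^{-j}$ for $j\ge1$ (and $|a_0|\le m$).

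Second, I would compare $P_k$ with the truncated series $\sum_{j=0}^{k}a_j\tau_j(x)$. Because $P_k$ interpolates $g$ at Chebyshev nodes, the discrete orthogonality of $\{\tau_j\}$ at those nodes yields an aliasing identity expressing the coefficients $c_j$ of $P_k$ as $c_j=a_j+\sum_{\ell\ge1}\bigl(a_{2k\ell-j}+a_{2k\ell+j}\bigr)$. Writing
\begin{equation*}
g(x)-P_k(x)=\sum_{j=k+1}^{\infty}a_j\tau_j(x)-\sum_{j=0}^{k}(c_j-a_j)\tau_j(x),
\end{equation*}
using $|\tau_j(x)|\le 1$ on $[-1,1]$, and inserting the coefficient bound gives
\begin{equation*}
\|g-P_k\|\le\sum_{j>k}|a_j|+\sum_{j\le k}|c_j-a_j|\le 2\sum_{j>k}|a_j|\le 4m\sum_{j>k}\rho^{-j}=\frac{4m\rho^{-k}}{\rho-1},
\end{equation*}
because each aliased index $2k\ell\pm j$ with $j\le k$ and $\ell\ge1$ exceeds $k$, so the aliasing sum is a rearrangement of a subset of the truncation tail and is dominated by it.

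The main obstacle is the aliasing identity for $c_j$: it is what forces the factor $4$ rather than $2$, and it requires a careful $2k$-point discrete Fourier argument on the nodes $\theta_\ell=\ell\pi/k$ (equivalently, orthogonality of $\cos(j\theta_\ell)$ modulo $2k$). Once this combinatorial identity is in hand, the geometric-series bookkeeping and the coefficient decay from the Bernstein contour estimate combine mechanically to yield the stated inequality for every $k\ge 0$.
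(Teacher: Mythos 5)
Your proposal is essentially correct, and it reconstructs the standard argument behind Theorem 8.2 of Trefethen's book; the paper itself offers no proof of this lemma, it simply cites the reference, so there is nothing internal to compare against. The two ingredients you identify are exactly the right ones: the Bernstein-ellipse contour estimate giving $|a_0|\le m$ and $|a_j|\le 2m\rho^{-j}$ for $j\ge 1$ (taking contours $|z|=s<\rho$ and letting $s\to\rho$, since the ellipse is open), and the aliasing identity at the Chebyshev grid, after which the geometric series $\sum_{j>k}\rho^{-j}=\rho^{-k}/(\rho-1)$ delivers the constant $4$. Two small points deserve care. First, your aliasing formula $c_j=a_j+\sum_{\ell\ge1}(a_{2k\ell-j}+a_{2k\ell+j})$ needs the usual modifications at the edge indices $j=0$ and $j=k$ (e.g.\ $c_k=a_k+a_{3k}+a_{5k}+\cdots$), otherwise the term $a_{2k-j}$ with $j=k$ would double-count $a_k$; the bound $\sum_{j\le k}|c_j-a_j|\le\sum_{m>k}|a_m|$ survives because each $a_m$ with $m>k$ still aliases to exactly one index in $[0,k]$ with unit weight. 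Second, note a mismatch in the paper itself: the $P_k$ defined by (\ref{eqn:Cheby})--(\ref{eqn:ChebyAT}) uses the integral (projection) coefficients, i.e.\ it is the truncated Chebyshev series rather than the interpolant, for which the sharper bound $2m\rho^{-k}/(\rho-1)$ holds and your aliasing step is unnecessary; the factor $4$ is only needed for interpolation at Chebyshev points. Since $4>2$, the stated inequality remains valid either way, and your argument covers the harder (interpolation) case. One also needs $\rho>1$ for the series to converge, which is implicit in the (nonstandard) description of the ellipse given in the paper.
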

	\noindent The Bernstein ellipse $E$ has foci $\pm 1$ and major radius $1+\rho$ for all $\rho \ge 0$. For more details about the Bernstein ellipses, we kindly refer to the book by Trefethe \cite{trefethen2019approximation}.
	
	Based on Lemma \ref{lm:Approximation}, the system (\ref{eqn:sysstate}) can be expressed as,
	\begin{equation}
		\begin{aligned}
			\label{eqn:Appro_sys}
			\dot{x}\;\;\;&=\;f(x) + P_k(x) + d_{\xi}(x), \\
		\end{aligned}
	\end{equation}
	
	\noindent where the unknown term $d_{\xi}(x)$ satisfies Assumption 2, 3,
	\begin{equation}
		\begin{aligned}
			\label{eqn:Appro_d}
			d_{\xi}(x)=d(x)+\xi.\\
		\end{aligned}
	\end{equation}
	
	The approximation accuracy of Chebyshev interpolants is linearly dependent on the degree. We will illustrate a comparison in the following example.
	
	\textit{Example $2$}:~ Consider a nonlinear function $y = \sqrt{\vert x-3 \vert}$ approximated by the Chebyshev interpolants of degree $4$ and $80$ in $[0,6]$, the higher degree approximation is more accurate with less bounded remainder $\xi$ in Figure \ref{fig:fig2}. 
	
	\begin{figure}[ht]
		\centering 
		\begin{minipage}{0.23\textwidth}
			\centering 
			\includegraphics[width=\textwidth]{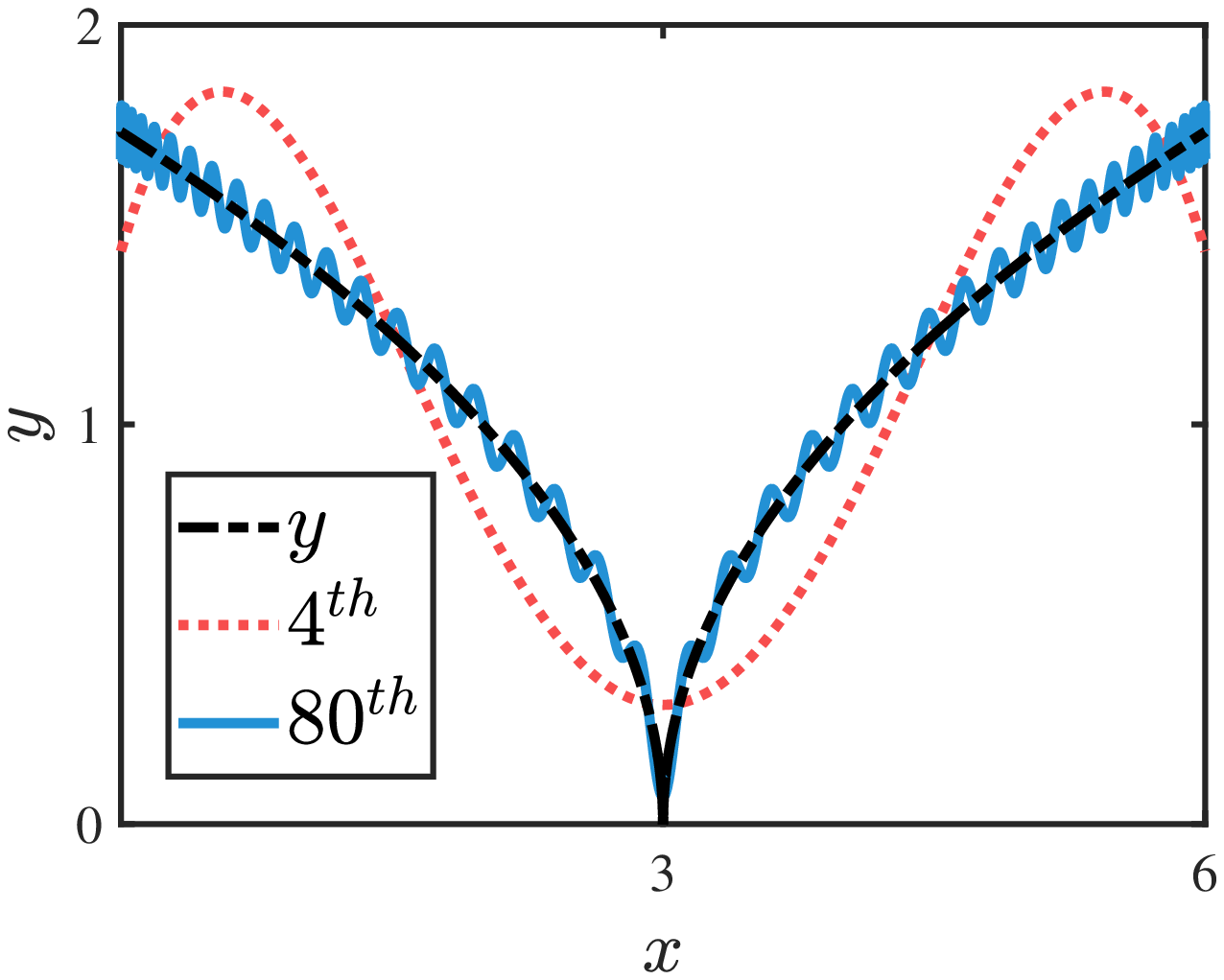}
			\centerline{(a)}
		\end{minipage}
		\begin{minipage}{0.23\textwidth}
			\centering 
			\includegraphics[width=\textwidth]{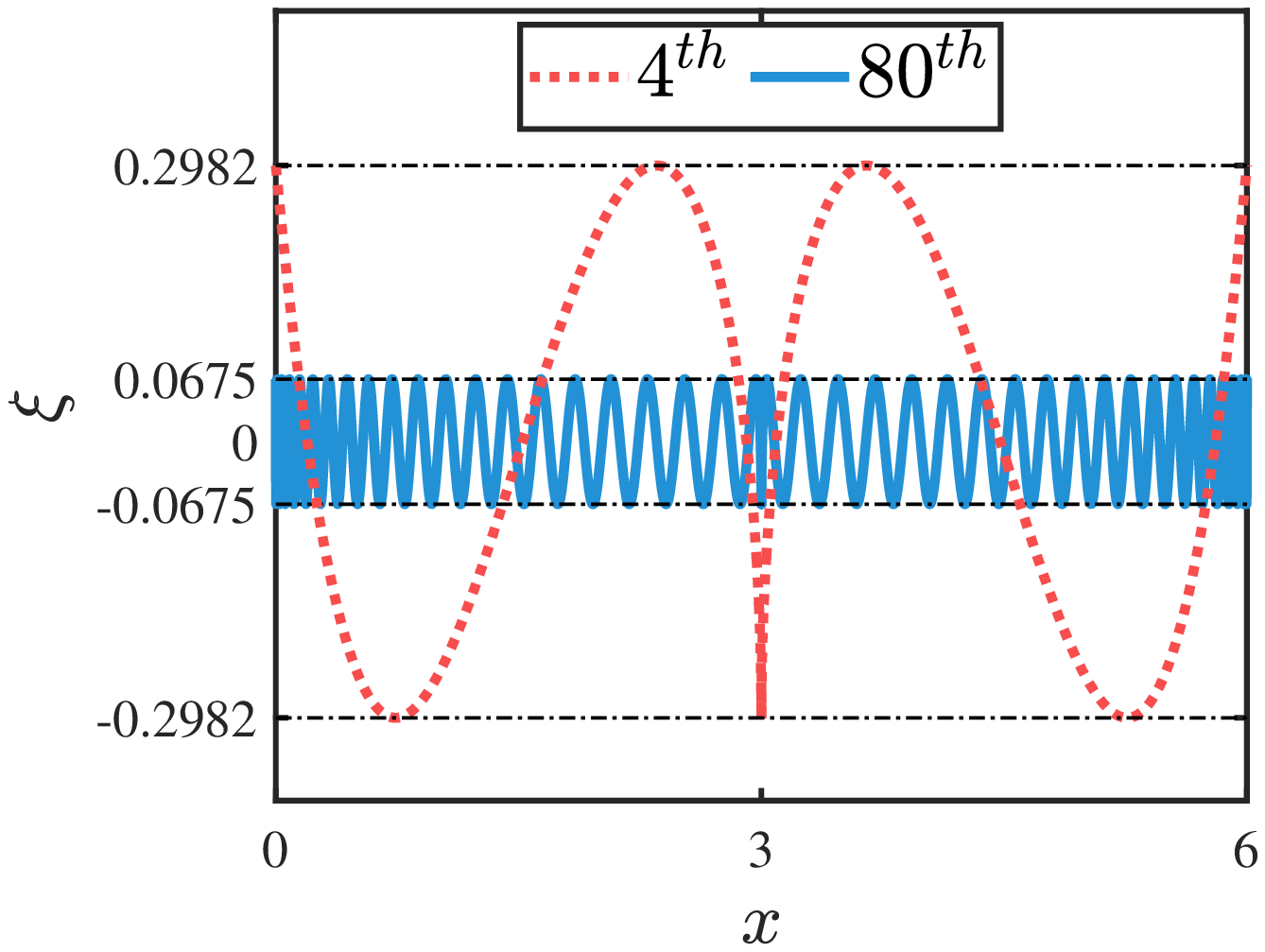}
			\centerline{(b)}
		\end{minipage}
		\caption{The red dot line and the blue solid line depict the results of the Chebyshev interpolants of degree $4$ and $80$, respectively. $(a)$ The black dashed line depicts the real value of $y$. $(b)$ The black dashed lines denote the boundary of different Chebyshev interpolants remainder $\xi$.}
		\label{fig:fig2}
	\end{figure}
	

	\subsection{Probability Bounds on the GP model}
	The following result provides the probability bounds in the distribution of a GP model.
	
	\begin{lemma}(Lemma 2 of \cite{umlauft2018uncertainty}) 
		\label{lem:krause_inequality}
		Suppose there exist $k$ measurements $\{(x_1,y_1), (x_2,y_2), \dots, (x_k, y_k)\}$ of a bounded function $h(x)\in \mathcal{H}$, where $x$ denotes the state, $y$ denotes the measurement of $h(x)$ that corrupted with noise $\{\epsilon_1,\epsilon_2,\cdots \epsilon_k\}\sim\mathcal{N}(0,\sigma_n^2)$, and $\mathcal{H}$ denotes the RKHS. Let $\delta\in(0,1)$. For $h(x)$ and its inferred GP model $(m_h(x), \sigma^2_h(x))$ holds,
		\begin{equation} \label{eq:hhatbound}
			\mathbb{P}\{\Vert h(x)-m_{h}(x) \Vert  \leq \Vert \beta\Vert \Vert  \sigma_{h}(x)\Vert ,\forall x \in \mathcal{X}\}\geq (1-\delta)^k,
		\end{equation}
		\noindent where $[\beta]_k =\sqrt{2\norm{h_k}^2 + 300\gamma_k\log^3(k/\delta)}$ is a discounting factor, $\Vert\cdot\Vert$ is a norm of $\mathcal{H}$, and $\gamma_k$ is a factor denoting the maximum mutual information over the measurements.$\hfill\square$
	\end{lemma}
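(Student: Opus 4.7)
The plan is to adapt the standard Gaussian process confidence machinery (Srinivas et al.\ 2010, later sharpened by Chowdhury and Gopalan 2017) to a bounded-norm RKHS function observed at $k$ noisy locations. The proof naturally decomposes into three stages: a pointwise Gaussian concentration step, an RKHS/information-gain step that upgrades the bound to a uniform statement over $\mathcal{X}$, and a bookkeeping step that assembles the $k$ per-measurement confidence events to yield the factor $(1-\delta)^k$.

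First I would fix an arbitrary query state $x\in\mathcal{X}$. Because $h\in\mathcal{H}$ and the noise vector is $\mathcal{N}(0,\sigma_n^2 I)$ by Assumption 2, standard GP regression yields a posterior with mean $m_h(x)$ and variance $\sigma_h^2(x)$, so a Gaussian tail inequality gives
$$\mathbb{P}\bigl(\norm{h(x)-m_h(x)}\leq t\,\norm{\sigma_h(x)}\bigr)\;\geq\;1-2\exp(-t^2/2)$$
for every $t>0$. The crucial step is the upgrade from this pointwise bound to a uniform one in $x$. Here I would invoke Assumption 3 ($\norm{h}_{\mathcal{H}}$ bounded by $c_g$) together with the maximum information gain $\gamma_k$, which quantifies how informative the $k$ observations are about $h$. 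Following the Srinivas-type discretization argument, one controls the Lipschitz variation of $m_h$ and $\sigma_h$ across a dense net of $\mathcal{X}$ and then chooses $t=[\beta]_k$ so that the summand $300\gamma_k\log^3(k/\delta)$ absorbs the union-bound cost over the net while the summand $2\norm{h_k}^2$ absorbs the RKHS representer contribution, yielding the form of $[\beta]_k$ stated.

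The factor $(1-\delta)^k$ then comes from intersecting the $k$ independent noise-level confidence events used in forming the posterior: since $\{\epsilon_i\}_{i=1}^k$ are i.i.d.\ Gaussian under Assumption 2, requiring each individual measurement-wise event to hold with probability at least $1-\delta$ gives joint probability at least $(1-\delta)^k$. The main obstacle is the uniform-in-$x$ step: since $\mathcal{X}\subseteq\mathbb{R}^n$ is continuous, a naive union bound over points is unavailable, and one must use the information-gain machinery to convert the uncountable supremum into the single correction term $300\gamma_k\log^3(k/\delta)$. A secondary technical point is verifying that Assumption 3 ($\norm{d}\leq c_g$) indeed implies $\norm{h_k}^2<\infty$ so that $\beta$ is well-defined; this is immediate once one identifies $h$ with either $d$ or $d_\xi$ and recalls that Chebyshev approximation leaves the RKHS bound intact.
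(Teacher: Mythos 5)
The paper offers no proof of this lemma at all: it is imported verbatim as Lemma~2 of \cite{umlauft2018uncertainty}, which in turn rests on Theorem~6 of \cite{srinivas2009gaussian}, so there is no internal argument to compare yours against. Judged on its own terms, your reconstruction assembles the right ingredients (bounded RKHS norm, maximum information gain $\gamma_k$, a concentration step) but attributes the two key mechanisms to the wrong sources. In the agnostic/RKHS setting that this particular $\beta$ belongs to --- where $h$ is a \emph{fixed} function with $\norm{h}_{\mathcal{H}}<\infty$, not a sample path of the GP prior --- the pointwise step is not a Gaussian tail inequality on $h(x)-m_h(x)$ (that quantity is not Gaussian; only the noise is), and the uniformity over the uncountable $\mathcal{X}$ is not obtained by a Lipschitz net plus union bound. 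It comes essentially for free from the reproducing property: $h(x)-m_h(x)$ is an inner product against the representer of evaluation at $x$, and Cauchy--Schwarz in the (data-dependent) RKHS bounds it by $\sigma_h(x)$ times a quantity independent of $x$; the probabilistic work is confined to a martingale (Freedman-type) concentration over the $k$ noise terms, which is precisely where the $300\,\gamma_k\log^3(k/\delta)$ term arises. The net-plus-union-bound route you describe is the proof of the \emph{Bayesian} confidence bound (Theorem~2 of \cite{srinivas2009gaussian}) and produces a $\beta$ that scales with the logarithm of the discretization cardinality, not the expression stated here.

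Your derivation of the exponent $(1-\delta)^k$ would also fail as written. The scalar bound already holds simultaneously for all $k$ measurements and all $x\in\mathcal{X}$ with probability at least $1-\delta$; it is not assembled by intersecting $k$ independent per-measurement confidence events, since the posterior after $k$ observations depends on all measurements jointly and those events are not independent in the required sense --- multiplying them would double-count the role already played by $\gamma_k$ and $\log^3(k/\delta)$. In \cite{umlauft2018uncertainty} the exponent comes from applying the scalar result separately to each output dimension of the vector field, each modeled by an independent GP, and multiplying the resulting probabilities; the appearance of $k$ (the number of measurements) in the exponent here is a transcription of that dimension count. Finally, your closing remark that Assumption~3 ``immediately'' bounds $\norm{h_k}$ when $h$ is identified with $d_\xi$ glosses over whether the Chebyshev remainder $\xi$ itself lies in $\mathcal{H}$ with controlled norm: only $d$ is assumed norm-bounded, and $d_\xi=d+\xi$, so this requires an additional (short) argument rather than being immediate.
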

	
	
	\begin{remark}
		The value of $\gamma_k$ is related to the type of kernels, e.g., RBF kernel, Matérn kernel and linear kernel from \cite{srinivas2009gaussian}. Thus, its sublinear dependent term $\beta_{k}$ can be regarded as a constant \cite{berkenkamp2016safe}. In \cite{umlauft2018uncertainty}, a more correlated statement yields that with more appropriate prior information, the value of $\sigma_{h}(x)$ will decrease such that the dynamics $h(x)$ can be represented by $m_h(x)$ with fewer differences.$\hfill\square$
	\end{remark}
	
	The unknown term $d_\xi(x)$ in (\ref{eqn:Appro_d}) can be learned by using a GP model. Combining previous results of the system (\ref{eqn:Appro_sys}), a probabilistic statement toward the exact dynamics,
	\begin{equation}
		\label{eqn:gp_learned_system_2}
		\begin{aligned}
			\dot{x}= f(x)+P_k(x)+m_{d_\xi}(x),
		\end{aligned}
	\end{equation} 
	
	\noindent holds with probability greater or equal to $(1-\delta)^k, \delta\in(0,1)$. Therefore, we can obtain a high probability statement (\ref{eqn:gp_learned_system_2}) of (\ref{eqn:sysstate}) based on Lemma 3, which will be needed in the analysis of the proposed ROA estimate.

	\subsection{Covariance Oriented Safe Sample Policy}\label{sec:cov}
	
	Computing a high-confidence GP model is closely related to the appropriate prior information. If the mean function $m(x)$ of a GP model is closer to the prior information, $k(x, x_*)$ will decrease to get close to the exact dynamics \cite{3569}. Therefore, handling the high covariance data will accelerate the learning process. Meanwhile, to maintain the prior information is always safe, we propose a positive sample policy inside an existing ROA as follows, 
	
	\begin{equation}
		\label{eqn:samplepolicy}
		\begin{aligned}
			x^* = \mathop{\arg\max}\limits_{x, \psi(t;d_\xi(x))\in \mathcal{L}_k}\quad \sum\limits_{k=0}^{K}{{k}}(x, \psi(t;d_\xi(x))),
		\end{aligned}
	\end{equation}
	
	\noindent where $\sum k(x, \psi(t;d_\xi(x)))$ denotes the covariance of a trajectory $\psi(t;d_\xi(x))$ starting from $x\in\mathcal{L}_{k}$, $\mathcal{L}_{k}$ denotes the estimate ROA and $k$ denotes the index number of the total episode $K$. The trajectory $\psi(t;d_\xi(x^*))$ of the result in (\ref{eqn:samplepolicy}) can be used to enrich the GP prior model recursively. The example below shows a comparison of the dynamics between the original system (\ref{eqn:sysstate}) and a learned system (\ref{eqn:gp_learned_system_2}).
	
	\textit{Example $3$}:~ Consider a partially unknown nonlinear system (\ref{example:ROA}) with unknown term $d(x)$ as
	\begin{eqnarray}\label{example:ROA}
		\begin{aligned}
			\begin{bmatrix} \dot{x}_1 \\ \dot{x}_2 \end{bmatrix} = 
			\begin{bmatrix}
				-x_{1}+x_{2}\\	x_{1}^2x_{2}+1-\sqrt{|\exp{(x_{1})}\cos{(x_{1})}|}
			\end{bmatrix}+
			\begin{bmatrix}
				0\\	d(x)
			\end{bmatrix}.
		\end{aligned}   
	\end{eqnarray}
	
	\begin{figure}[ht]
		\centering 
		\begin{minipage}{0.23\textwidth}
			\centering 
			\includegraphics[width=\textwidth]{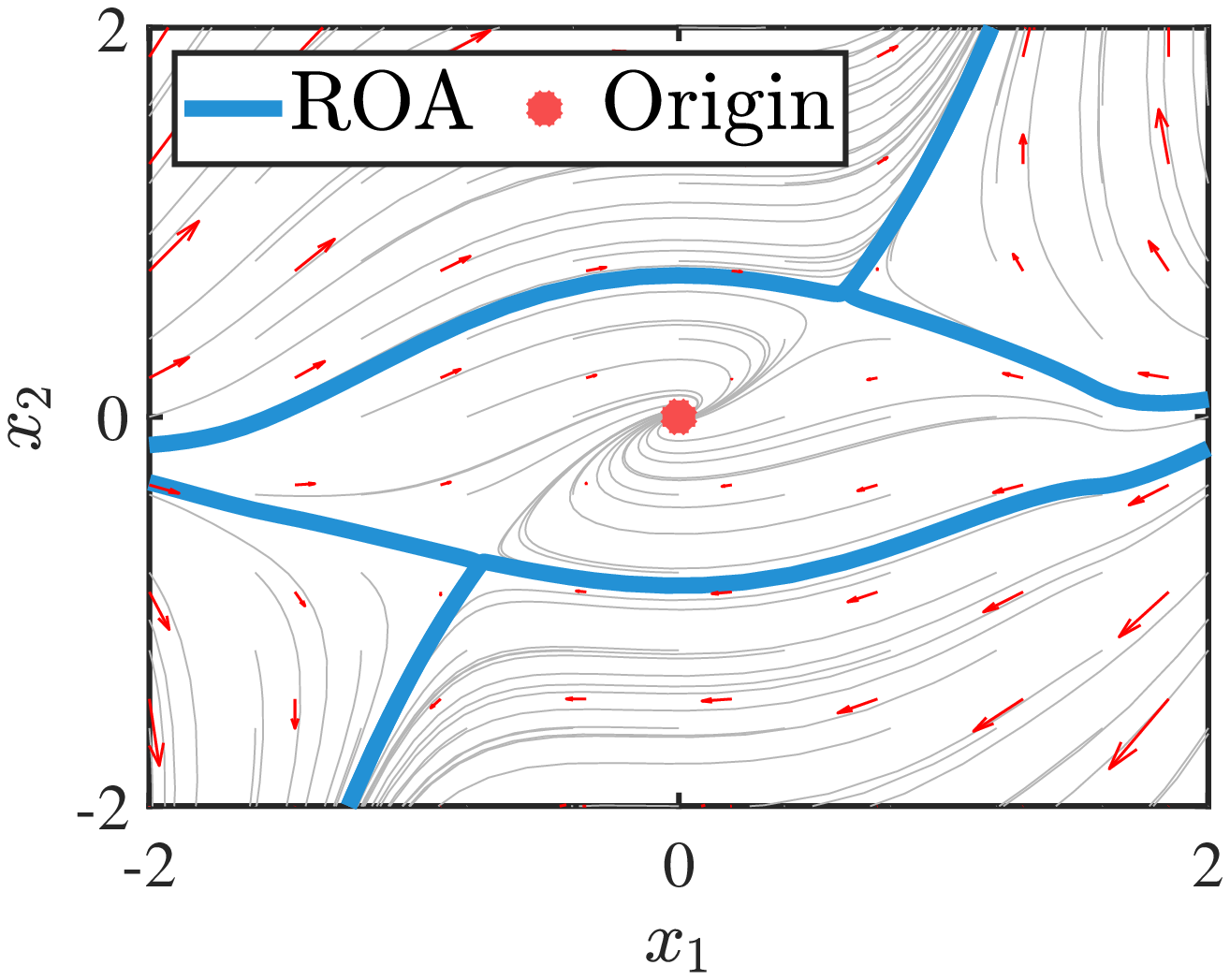}
			\centerline{(a)}
		\end{minipage}
		\begin{minipage}{0.23\textwidth}
			\centering 
			\includegraphics[width=\textwidth]{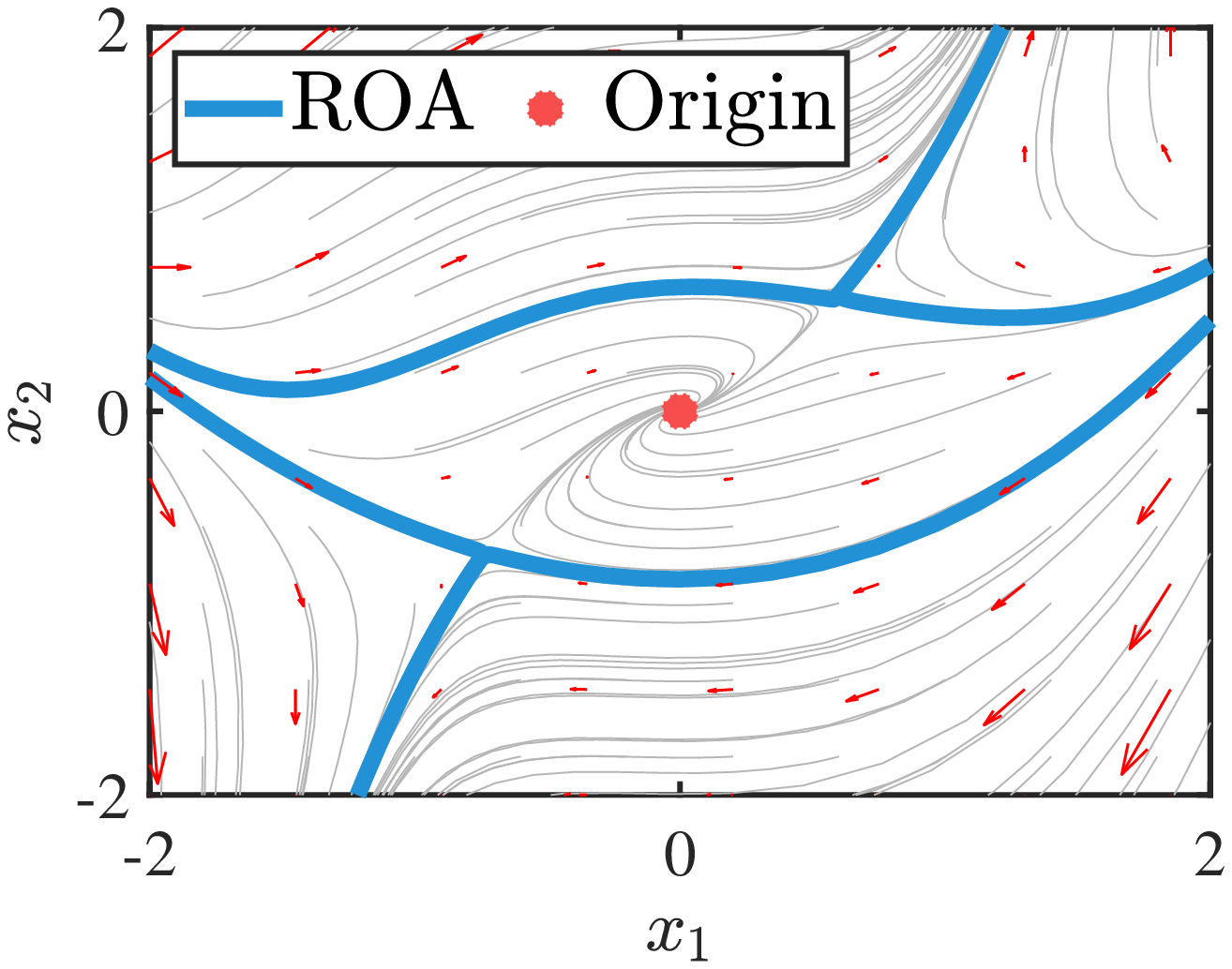}
			\centerline{(b)}
		\end{minipage}
		\caption{$(a)$ The left schematic demonstrates the exact ROA of (\ref{example:ROA}) with $d(x)=0$. $(b)$ The right schematic demonstrates the exact ROA of a learned polynomial system. The blue solid line depicts the ROA boundary, and the red filled circle depicts the equilibrium point.}
		\label{fig:fig_Comparison}
	\end{figure}
	
	\begin{figure}[ht] 
		\centering
		\includegraphics[width=0.85\linewidth]{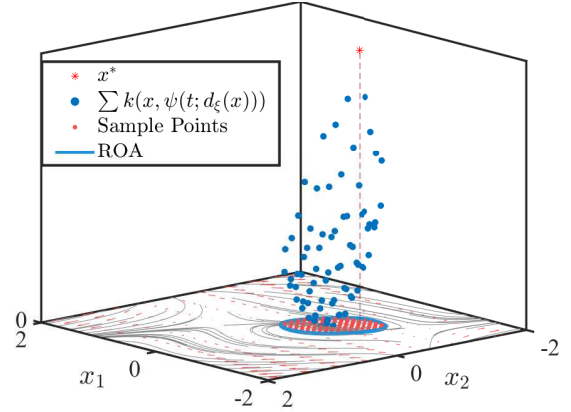}
		\caption{The result of $\sum{{k}}(x, \psi(t;d_\xi(x)))$ in Figure \ref{fig:fig_Comparison}(b). The red star point poses the most uncertain point with maximum value of  $\sum{{k}}$, the blue solid points in the space denote the value of $\sum{{k}}$ starting from the red solid points in $x_1, x_2$ plane and the solid blue eclipse denotes an estimated ROA.  }
		\label{fig:fig2d_2}
	\end{figure}
	
	Figure \ref{fig:fig_Comparison}(b) shows the learned polynomial system dynamics of (\ref{example:ROA}). The non-polynomial term in (\ref{example:ROA}) is approximated by $4^{th}$ degree Chebyshev interpolants in $[-2,2]^2$, while the state trajectory $\psi(t;d_\xi(-0.05,-0.05))$ is used to construct a GP model with $3^{rd}$ degree polynomial mean function. Figure 4 shows the obtained sum of covariance of the learned polynomial system based on ($\ref{eqn:samplepolicy}$). Compared to the points around the equilibrium point, those nearby the ROA edge have larger sum of covariance and can be considered with high uncertainty. Thus, the distribution of the sum of covariance in Figure 4 poses a direction to reduce the uncertainty during the learning process.
	
	\section{Estimating the Region of Attraction}\label{sec:ROAestimate}
	In this section, we will estimate the ROA of the learned polynomial system (\ref{eqn:gp_learned_system_2}) via SOSPs . The relationship of estimates between the learned polynomial system and the true dynamics will also be discussed. 
	
	\subsection{Parameterize the ROA by Barrier Function}
	\label{ap:1}
	
	The parameterization of a barrier function in (\ref{eqn:gp_learned_system_2}) via SOSPs is developed in \cite{wang2018permissive}. To deal with the non-negativity constraints, we will introduce the Positivestellensatz (P-satz) here for further SOSPs implementation.

    Let $\mathcal{P}$ be the set of polynomials and $\mathcal{P}^\text{SOS}$ be the set of sum of squares polynomials, e.g., $P(x)=\sum_{i=1}^{k}p_i^2(x), $ where $p_i(x)\in\mathcal{P}$ and $P(x)\in \mathcal{P}^{SOS}$. 
	
	\begin{lemma}\label{lem:psatz}
		(\cite{putinar1993positive}) For polynomials $\{a_i\}_{i=1}^m$, $\{b_j\}_{j=1}^n$ and $p$, define a set $\mathcal{B}=\{x\in\mathcal{X}:\{a_i(x)\}_{i=1}^m=0,\{b_j(x)\}_{j=1}^n\geq0\}$. Let $\mathcal{B}$ be compact, then $\forall x \in \mathcal{X}, p(x)\geq0$ holds if,
		
		\vspace{7pt}\centering{\hspace{32pt}
			$\left\{
			\begin{array}{l}
				\exists r_1,\dots, r_m \in \mathcal{P}, ~ s_1,\dots, s_n \in \mathcal{P}^{\text{SOS}}, \\
				p-\sum^{m}_{i=1}r_i a_i-\sum^{n}_{j=1}s_j b_j \in \mathcal{P}^{\text{SOS}}.
			\end{array} 
			\right. $}
		\hfill$\square$
	\end{lemma}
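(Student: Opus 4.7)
The plan is to prove the sufficient (``if'') direction that is actually asserted by Lemma~\ref{lem:psatz}. Given polynomials $\{r_i\}_{i=1}^m$ and SOS polynomials $\{s_j\}_{j=1}^n$ such that $\sigma := p - \sum_i r_i a_i - \sum_j s_j b_j$ is itself SOS, I would simply evaluate this algebraic identity at an arbitrary $x \in \mathcal{B}$ and read off the sign of $p(x)$ term by term.

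More concretely, I would first exploit the defining equalities $a_i(x)=0$ of $\mathcal{B}$ to annihilate the middle sum $\sum_i r_i(x) a_i(x)$, with no sign information on the $r_i$ required; this is precisely why the $r_i$ are allowed to be arbitrary polynomials rather than SOS. Next I would use the defining inequalities $b_j(x) \geq 0$ together with $s_j(x) \geq 0$ (the latter because each $s_j$ is a sum of squares) to conclude that every product $s_j(x) b_j(x)$ is nonnegative, so $\sum_j s_j(x) b_j(x) \geq 0$. Finally, $\sigma(x) \geq 0$ by the SOS property of $\sigma$. Summing the three contributions gives $p(x) \geq 0$, and since $x \in \mathcal{B}$ was arbitrary the claim follows.

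There is essentially no technical obstacle in this direction: it is a one-line pointwise check once the decomposition is in hand, and in fact compactness of $\mathcal{B}$ is not needed for the argument I propose. The genuinely hard content of the Positivstellensatz lies in the \emph{converse} direction of Putinar's theorem~\cite{putinar1993positive}, which \emph{constructs} the certificate $(r_i, s_j)$ from the mere hypothesis that $p \geq 0$ on $\mathcal{B}$; that direction does require the Archimedean property supplied by compactness, together with a Hahn--Banach-type separation argument in a space of polynomial functionals, and is where all the real difficulty resides. Since the subsequent SOSP formulation in the paper only invokes the easy direction sketched above to certify nonnegativity of a candidate polynomial on the relevant semialgebraic set, I would stop at the pointwise evaluation and not attempt to reproduce the converse.
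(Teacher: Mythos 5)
Your argument is correct, and it is the standard one for the direction actually asserted: the paper itself supplies no proof of Lemma~\ref{lem:psatz}, deferring entirely to the citation of Putinar~\cite{putinar1993positive}, so there is nothing in the text to compare against beyond the statement. Your pointwise evaluation --- kill the $\sum_i r_i a_i$ term using $a_i(x)=0$ with no sign condition on the $r_i$, bound $\sum_j s_j(x) b_j(x)\geq 0$ using $b_j(x)\geq 0$ and $s_j\in\mathcal{P}^{\text{SOS}}$, and use nonnegativity of the SOS residual --- is exactly how this sufficient direction is verified, and you are right that compactness (more precisely, the Archimedean property of the quadratic module) is only needed for the converse, which the paper never uses. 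One caveat worth flagging: your evaluation establishes $p(x)\geq 0$ for all $x\in\mathcal{B}$, whereas the lemma as printed claims $p(x)\geq 0$ for all $x\in\mathcal{X}$. The latter is not what your argument yields and is false in general (take $p=b_1$, $s_1=1$, all other multipliers zero: the certificate holds but $b_1$ can be negative off $\mathcal{B}$). This is evidently a typo in the lemma --- the SOSPs (\ref{bcalgo:step1})--(\ref{bcalgo:step3}) use the multipliers precisely to certify nonnegativity on the sublevel or superlevel sets, i.e., on $\mathcal{B}$ --- so your proof establishes the statement the paper actually needs, but you should state explicitly that the conclusion holds on $\mathcal{B}$ rather than on all of $\mathcal{X}$.
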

	This lemma declared that any strictly positive polynomial $p$ is in the cone that generated by polynomials $\{a_i\}_{i=1}^m$ and $\{b_j\}_{j=1}^n$. Using Lemma \ref{lem:psatz}, the optimal barrier function $h^*(x)$ of the learned polynomial system (\ref{eqn:gp_learned_system_2}) can be searched in the state space via \textbf{3} SOSPs.
	
	\textbf{1.} Obtain the maximum sublevel set $\{x\vert V(x)\leq c^*\}$ of a specified Lyapunov function $V(x)$ by a bilinear search,
	\begin{equation}
		\label{bcalgo:step1}
		\begin{aligned}
			& c^* = & \underset{ c\in\mathbb{R}^+, \, L_c(x)\in \mathcal{P}^\text{SOS}}{\text{max}}
			&\:\:c\\
			&  \text{s.t.}
			&  -\frac{\partial V(x)}{\partial x}\dot{x} - L_c(x)(c-V(x)) & \in \mathcal{P}^{\text{SOS}},
		\end{aligned}
	\end{equation}
	\noindent where $L_c(x)$ is an auxiliary factor to relax the non-negativity constraint for the initial barrier function $h(x)$.
	
	\textbf{2.} Search another two auxiliary factors $L_{1,2}(x)$ for $h(x)$,
	\begin{equation}
		\label{bcalgo:step2}
		\begin{aligned}
			&& \exists \; L_1(x), L_2(x)\in\mathcal{P}^{\text{SOS}}&\\
			&  \text{s.t.}
			&  -\frac{\partial V(x)}{\partial x}\dot{x} - L_1(x) h(x)&\in \mathcal{P}^{\text{SOS}}, \\
			&
			&  \frac{\partial h(x)}{\partial x}\dot{x} - L_2(x) h(x) &\in \mathcal{P}^{\text{SOS}}.
		\end{aligned}
	\end{equation}
	
	\noindent Meanwhile, $h(x)$ can be re-written into the square matrix representation form as $h(x)=Z(x)^{\mathrm{T}}QZ(x)$, where $Q$ is a semi-definite coefficient matrix and $Z(x)$ is a monomial vector \cite{wang2018permissive}. The trace $\textit{Tr}(\cdot)$ of $Q$ is used to approximate the volume of barrier certified ROA (BCROA).
	
	\textbf{3.} Enlarge $\textit{Tr}(Q)$ to parameterize a permissive $h(x)$ with fixed $L_1(x)$ and $L_2(x)$,
	\begin{equation}
		\label{bcalgo:step3}
		\begin{aligned}
			&& \underset{ \substack{h(x)\in\mathcal{P}}}{\text{max}} \quad
			\textit{Tr}(Q) &\\
			&  \text{s.t.}
			&  -\frac{\partial V(x)}{\partial x}\dot{x} - L_1(x) h(x)&\in \mathcal{P}^{\text{SOS}}, \\
			&
			&  \frac{\partial h(x)}{\partial x}\dot{x} - L_2(x) h(x) &\in \mathcal{P}^{\text{SOS}}. 
		\end{aligned}
	\end{equation}
	
	\noindent The optimal barrier function $h^*(x)$ can be found with a permissive BCROA if the increase of $\textit{Tr}(Q)$ is less than a threshold, otherwise repeat \textbf{2} and \textbf{3} for a long term. 
	
	These \textbf{3} SOSPs demonstrate the process to obtain an optimal BCROA directly. Thus, if a ROA exists in the learned polynomial system (\ref{eqn:Appro_sys}), a permissive BCROA would be computed by these SOSPs directly.
	
	\subsection{Existence of Probabilistic BCROA}
	The relationship of a BCROA in the learned system (\ref{eqn:gp_learned_system_2}) toward the true dynamics is given in the following theorem.
	
	\begin{theorem}
		\label{them:ROAGP}
		Given $k$ measurements of a partially unknown system (\ref{eqn:Appro_sys}) such that we can obtain a learned system (\ref{eqn:gp_learned_system_2}) with a probability greater or equal to $(1-\delta)^k, \delta\in(0,1)$. Based on the SOSPs (\ref{bcalgo:step1}), (\ref{bcalgo:step2}) and (\ref{bcalgo:step3}), if there exists a BCROA $\mathcal{L}=\{x\in\mathcal{X}\vert h(x)\geq 0\}$ such that
		\begin{equation}
			\label{theo:2}
			\begin{aligned}
				\frac{\partial h(x)}{x}(f(x)+P_k(x)+m_{d_\xi}(x))\geq 0
			\end{aligned}
		\end{equation}
		\noindent holds for the states $x\in\mathcal{L}$, then $\mathcal{L}$ can be regarded as a BCROA in (\ref{eqn:Appro_sys}) with probability bounds $((1-\delta)^k,1)$.
	\end{theorem}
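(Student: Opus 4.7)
The plan is to verify the three barrier-function conditions in (\ref{Def:BarrierCertificate}) for the unknown system (\ref{eqn:Appro_sys}) by transporting them from the learned system (\ref{eqn:gp_learned_system_2}) on a high-probability event supplied by Lemma \ref{lem:krause_inequality}. First I would apply Lemma \ref{lem:krause_inequality} to the aggregated unknown term $d_\xi(x)=d(x)+\xi$ appearing in (\ref{eqn:Appro_d}). This is legitimate because Lemma \ref{lm:Approximation} makes the Chebyshev remainder $\xi$ uniformly bounded, so $d_\xi$ inherits the RKHS boundedness of $d$ from Assumption 3, while the noise hypothesis on the $d_\xi$-measurements follows from Assumption 2. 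Lemma \ref{lem:krause_inequality} then yields the event
\[
\mathcal{E}=\{\,\|d_\xi(x)-m_{d_\xi}(x)\|\leq\|\beta\|\|\sigma_{d_\xi}(x)\|,\ \forall x\in\mathcal{X}\,\},
\]
which satisfies $\mathbb{P}(\mathcal{E})\geq(1-\delta)^k$.

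Second, I would check each of the three defining properties of a BCROA on $\mathcal{E}$. The sign conditions $h(x)\geq 0$ on $\mathcal{L}$ and $h(x)<0$ on $\mathcal{L}_U$ are purely geometric consequences of the SOSP-certified $h(x)$ parameterized in (\ref{bcalgo:step1})--(\ref{bcalgo:step3}), independent of the vector field. The Lie-derivative inequality $\frac{\partial h}{\partial x}\dot{x}\geq 0$ on $\mathcal{L}$ is supplied by hypothesis (\ref{theo:2}) for the learned field, and on $\mathcal{E}$ it transfers to the true field of (\ref{eqn:Appro_sys}) because the residual $d_\xi-m_{d_\xi}$ is uniformly controlled by the GP covariance bound, so the two vector fields agree in the sign of their inner product with $\nabla h$ on $\mathcal{L}$.

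Finally, combining the two steps gives that $\mathcal{L}$ satisfies (\ref{Def:BarrierCertificate}) for the system (\ref{eqn:Appro_sys}) on an event of probability at least $(1-\delta)^k$, which yields the claimed probability window $((1-\delta)^k,1)$ for $\mathcal{L}$ being a valid BCROA of the partially unknown dynamics (the upper bound being the trivial bound on any probability).

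The main obstacle is the Lie-derivative transfer in the middle step: Lemma \ref{lem:krause_inequality} only bounds the pointwise error $\|d_\xi-m_{d_\xi}\|$, so sign preservation of $\frac{\partial h}{\partial x}\dot{x}$ is not automatic from (\ref{theo:2}) alone. A fully rigorous argument must absorb a margin of order $\|\beta\|\|\sigma_{d_\xi}(x)\|\cdot\|\nabla h(x)\|$ into the SOSP constraint (\ref{theo:2}), or equivalently interpret (\ref{theo:2}) as a robustified inequality with sufficient slack; this is the point I would argue most carefully, since it underwrites the claim that the event $\mathcal{E}$ actually implies the BCROA property for the true dynamics.
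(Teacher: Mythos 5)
Your proposal follows essentially the same route as the paper's proof --- certify $\mathcal{L}$ for the learned polynomial system, then transfer the certificate to (\ref{eqn:Appro_sys}) on the high-probability GP event --- but you work at a finer resolution, and the difference is worth recording. The paper keeps the probabilistic step abstract: the theorem's hypothesis already asserts that the learned system (\ref{eqn:gp_learned_system_2}) ``holds'' with probability at least $(1-\delta)^k$, and the proof simply reads this as ``the learned dynamics equal the true dynamics on that event,'' after which the transfer is a one-sentence remark. You instead instantiate the event through Lemma \ref{lem:krause_inequality}, and in doing so you correctly expose the point the paper glosses over: on your event $\mathcal{E}$ the lemma only gives $\Vert d_\xi(x)-m_{d_\xi}(x)\Vert\leq\Vert\beta\Vert\,\Vert\sigma_{d_\xi}(x)\Vert$, it does not make the two vector fields coincide, so (\ref{theo:2}) with zero slack does not by itself yield $\frac{\partial h}{\partial x}\,(f+P_k+d_\xi)\geq 0$; one must either strengthen (\ref{theo:2}) to carry a margin of order $\Vert\frac{\partial h}{\partial x}\Vert\,\Vert\beta\Vert\,\Vert\sigma_{d_\xi}\Vert$, as you propose, or accept the paper's stronger reading of the hypothesis. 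The obstacle you flag is therefore genuine and is present, unacknowledged, in the paper's own argument; your version is the more careful of the two. The one thing you pass over that the paper includes is the first half of its proof: invoking the Lyapunov constraints in (\ref{bcalgo:step1})--(\ref{bcalgo:step3}) to argue that trajectories started in $\mathcal{L}$ converge to the origin and that $h$ is non-decreasing along them, which is what makes $\mathcal{L}$ an invariant region of attraction for the learned system rather than merely a superlevel set satisfying (\ref{Def:BarrierCertificate}) pointwise; you should state that invariance-plus-convergence step explicitly before the probabilistic transfer.
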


	\begin{proof}		
		Given an arbitrary initial state $x_0\in\mathcal{L}$ in the learned system (\ref{eqn:gp_learned_system_2}), its state trajectory $\psi(t;x_0), t\in[0,\infty]$ is guaranteed to converge to the origin due to the Lyapunov constraints in (\ref{bcalgo:step1}), (\ref{bcalgo:step2}) and (\ref{bcalgo:step3}). Since the derivative of the barrier function $\frac{\partial h(\psi(t; x_0))}{\partial t}$ is non-negative over the trajectory $\psi(t; x_0)$, $h(\psi(t; x_0))$ will strictly increase along $\psi(t; x_0), t\in[0,\infty)$. Then, the BCROA $\mathcal{L}$ established a region with the safety and stability guarantee in (\ref{eqn:gp_learned_system_2}). 
		
		Because $\mathcal{L}$ denotes a certain region in (\ref{eqn:gp_learned_system_2}), while (\ref{eqn:gp_learned_system_2}) is an approximation result of (\ref{eqn:Appro_sys}) within probability bounds $((1-\delta)^k, 1)$. Thus, $\mathcal{L}$ also denotes a potential BCROA toward the true dynamics in (\ref{eqn:Appro_sys}) with the same probability bounds, which ends the proof.
	\end{proof}	
	
	
	The BCROA in Theorem \ref{them:ROAGP} that satisfies (\ref{bcalgo:step1}), (\ref{bcalgo:step2}) and (\ref{bcalgo:step3}) encodes the safety of the learned polynomial system (\ref{eqn:gp_learned_system_2}). It further shows a probabilistic relationship toward the real ROA in the concerned system (\ref{eqn:Appro_sys}) such that we can make high-probability statements about system safety.
	
	\subsection{Probabilistic Optimal BCROA Estimate}
	
	
	To illustrate our procedures of finding probabilistic optimal BCROAs, we present an algorithm below and a corresponding flowchart in Figure 5, which both consist of a learned polynomial system and SOSPs. Besides, the following theorem proposes the confidence range of the result from Algorithm 1 over episodes.
	

	
	
	
	\begin{algorithm}[ht]
		\caption{Optimal BCROAs Estimate}
		\KwIn{Original system (\ref{eqn:sysstate}); initial sublevel set of a given Lyapunov function $\mathcal{L}_0=\{x\vert V(x)\leq c_0\}$; SOSP termination $\epsilon$ and episodes $K$.}
		\KwOut{Optimal BCROAs $R=[\mathcal{L}^*_{1}, \mathcal{L}^*_{2}, \dots, \mathcal{L}^*_{K}]^{\mathrm{T}}$; probability bounds $P=[\delta_1,\delta_2,\dots, \delta_K]^{\mathrm{T}}$.}
    	Preprocess (\ref{eqn:sysstate}) into (\ref{eqn:Appro_sys}) by Chebyshev interpolants. Initialize the prior information set $D_\xi=\emptyset$.\\	
		Store the measurements of $d_\xi(x)$ in (\ref{eqn:Appro_sys}) as $D_0$.\\
		\For{$i\in\{1,2,\dots ,K\}$}{	
			Update prior information $D_\xi = \cup D_{i-1}$.\\	
			Construct (\ref{eqn:gp_learned_system_2}) with polynomial $m(x)$. \\
			Execute SOSP (\ref{bcalgo:step1}) and set the initial barrier function as $h_i(x)=c^*-V(x)$.\\
			Execute SOSPs (\ref{bcalgo:step2}) and (\ref{bcalgo:step3}) to obtain the optimal BCROA $\mathcal{L}^*_i=\{x \vert h^*_i(x) \geq 0\}$.\\ 
			Record the probability bounds $\delta_i$ and update the measurement of $d_\xi(x)$ according to (\ref{eqn:samplepolicy}).
		}	
		\Return $R, P$.\\
	\end{algorithm}
	
	
	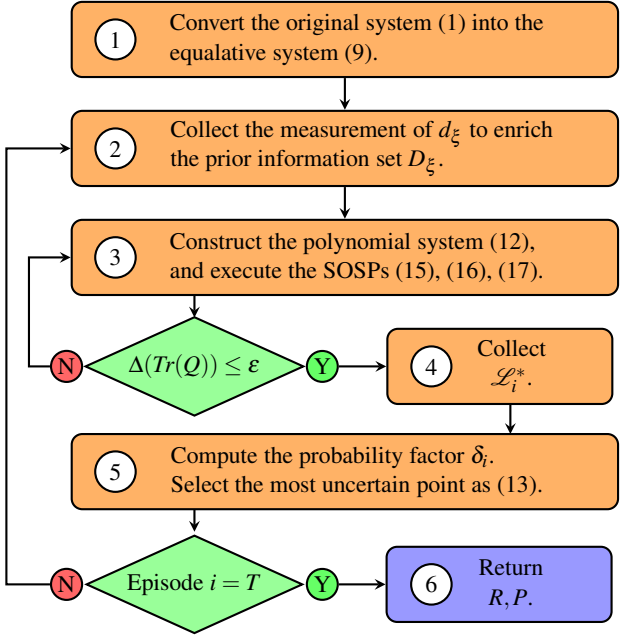
\begin{figure}[ht]
		\centering 
		\begin{tikzpicture}[
			Number/.style ={
				circle,
				draw,
				color=black,
				thick,
				fill=white,
				text=black,
				align=right,
				inner sep=1pt,},
			Number2/.style ={
				circle,
				draw,
				color=black,
				thick,
				fill=white,
				text=black,
				align=center,
				inner sep=2.7pt,},
			DecY/.style ={
				circle,
				draw,
				color=black,
				thick,
				fill=green!60,
				text=black,
				align=center,
				inner sep=1pt,},
			DecN/.style ={
				circle,
				draw,
				color=black,
				thick,
				fill=red!60,
				text=black,
				align=center,
				inner sep=1pt,}]
			
			\node (start1) [start, align=justify] { 
                \quad Convert the original system (\ref{eqn:sysstate}) into the\\ 
                \quad equalative system (\ref{eqn:Appro_sys}).};
			\node (start2) [start, align=justify, below of=start1, yshift=-0.45cm] { 
				\quad\, Collect the measurement of $d_\xi$ to enrich\\
				\quad\, the prior information set $D_\xi$.};
			\node (start3) [start, align=justify, below of=start2, yshift=-0.45cm] 
			{\quad Construct the polynomial system (\ref{eqn:gp_learned_system_2}), \\ 
				\quad and execute the SOSPs (\ref{bcalgo:step1}), (\ref{bcalgo:step2}), (\ref{bcalgo:step3}).};			
			\node (dec1)  [decision, below of=start3, xshift=-2.cm, yshift=-0.45cm] {$\Delta(\textit{Tr}(Q))\leq \epsilon$};
			\node (proc1) [proc, align=center, below of=start3, xshift=2.06cm, yshift=-0.45cm] 
			{\quad Collect\\ 
				\quad $\mathcal{L}^*_{i}$.};
			\node (start4) [start, align=justify, below of=start3, yshift=-1.85cm] 
			{\quad Compute the probability factor $\delta_i$.\\ 
				\quad Select the most uncertain point as (\ref{eqn:samplepolicy}).};
			\node (dec2)  [decision, below of=start4, xshift=-2cm, yshift=-0.5cm] {Episode $i=T$};
			\node (end) [end, align=center, below of=start4, xshift=2.05cm, yshift=-.5cm] {\quad Return\\ \quad $R, P$.};
			\node[Number2](number1) at (-3.05,0) {1};
			\node[Number2](number2) at (-3.05,-1.45) {2};
			\node[Number2](number3.a) at (-3.05,-2.9) {3};
			\node[DecN](decN1) at (-3.7,-4.35) {N};
			\node[DecY](decY1) at (-0.3,-4.35) {Y};
			\node[Number2](number3.b) at (1.15,-4.35) {4};
			\node[DecY](decY2) at (-0.3,-7.25) {Y};
			\node[Number2](number3.c) at (-3.05,-5.75) {5};
			\node[DecN](decN2) at (-3.7,-7.25) {N};
			\node[Number2](number3.c) at (1.15,-7.25) {6};
			\draw [arrow] (start1) -- (start2);
			\draw [arrow] (start2) -- (start3);
			\draw [arrow] (-2,-3.4) -- (-2,-3.7);
			\draw [arrow] (decY1) -- (proc1);
			\draw [arrow] (decN1) -- (-4.2,-4.35) -- (-4.2,-2.9) -- (start3);
			\draw [arrow] (2.2,-4.85) -- (2.2,-5.25);
			\draw [arrow] (-2,-6.25) -- (-2,-6.55);
			\draw [arrow] (decY2) -- (end);
			\draw [arrow] (decN2) -- (-4.5,-7.25) -- (-4.5,-1.45) -- (start2);
		\end{tikzpicture}
		\label{fig:flowchart}
		\caption{Flowchart of Algorithm 1.}
	\end{figure}
	
	\begin{theorem}
		\label{them:probROA}
		Algorithm 1 establishes $K$ probabilistic ROA $[\mathcal{L}^*_1,\mathcal{L}^*_2,\dots, \mathcal{L}^*_K]^{\mathrm{T}}$ with probabilities greater or equal to $[(1-\delta_1)^{n_1},(1-\delta_2)^{n_2},\dots,(1-\delta_K)^{n_K}]^{\mathrm{T}}$, where $n_{i}$ denotes the sample number with index $i=1,2\cdots,K$. Then, the following probability bounds are given:
		\begin{equation}\label{them:probROAS}\small
			\begin{aligned}
				\mathbb{P}\left\{\forall i \in [1,K], \forall \mathcal{L}^{*}_{i} \neq \emptyset, \mathcal{L}^*_{\cup} \subset D \right\}&=\min\{ (1-\delta_{i})^{n_i}\}\\
			    \mathbb{P}\left\{\forall i \in [1,K], \forall \mathcal{L}^{*}_{i} \neq \emptyset, \mathcal{L}^*_{\cap}\subset D \right\}&=\max\{(1-\delta_{i})^{n_i}\},
			\end{aligned}
		\end{equation}
		\noindent where $\mathcal{L}^*_{\cup}= \cup_{i=1}^K \mathcal{L}^*_i$ and $\mathcal{L}^*_{\cap}=\cap _{i=1}^K\mathcal{L}^*_i$ denote the union set and the intersection set of these optimal BCROAs, respectively, and $D$ denotes the exact ROA of the system (\ref{eqn:Appro_sys}).	
	\end{theorem}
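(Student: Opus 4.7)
The plan is to reduce the statement to $K$ applications of Theorem~\ref{them:ROAGP} followed by elementary set-theoretic bookkeeping. For each episode $i\in\{1,\dots,K\}$ of Algorithm~1, the learned dynamics (\ref{eqn:gp_learned_system_2}) is built from $n_{i}$ noisy measurements of $d_{\xi}(x)$, and the SOSPs (\ref{bcalgo:step1})--(\ref{bcalgo:step3}) return the candidate $\mathcal{L}^{*}_{i}$. Theorem~\ref{them:ROAGP}, which itself rests on Lemma~\ref{lem:krause_inequality} applied to the residual $d_{\xi}(x)$, then supplies, for every episode with $\mathcal{L}^{*}_{i}\neq\emptyset$, the per-episode event
\begin{equation*}
E_{i} := \{\mathcal{L}^{*}_{i}\subset D\}, \qquad \mathbb{P}(E_{i}) \geq (1-\delta_{i})^{n_{i}}.
\end{equation*}
All probabilistic content of the proof is concentrated here; everything downstream is a set-inclusion argument.

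For the union estimate, I would observe that $\mathcal{L}^{*}_{\cup}=\bigcup_{i}\mathcal{L}^{*}_{i}$ is contained in the exact ROA $D$ if and only if every individual $\mathcal{L}^{*}_{i}$ is contained in $D$; that is, $\{\mathcal{L}^{*}_{\cup}\subset D\}=\bigcap_{i}E_{i}$. The elementary bound $\mathbb{P}(\bigcap_{i}E_{i})\leq \min_{i}\mathbb{P}(E_{i})$, combined with the per-episode bound above, then yields the $\min$ side of (\ref{them:probROAS}). For the intersection estimate, $\mathcal{L}^{*}_{\cap}\subset\mathcal{L}^{*}_{i}$ holds for every $i$, so $E_{i}\subset\{\mathcal{L}^{*}_{\cap}\subset D\}$ and therefore $\bigcup_{i}E_{i}\subset\{\mathcal{L}^{*}_{\cap}\subset D\}$. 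Using $\mathbb{P}(\bigcup_{i}E_{i})\geq \max_{i}\mathbb{P}(E_{i})$ then delivers the $\max$ side. The clause $\forall \mathcal{L}^{*}_{i}\neq\emptyset$ is only a bookkeeping remark, since empty iterates contribute $\emptyset\subset D$ trivially and may be dropped from both indices without affecting $\mathcal{L}^{*}_{\cup}$ or $\mathcal{L}^{*}_{\cap}$.

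The main obstacle is the dependence structure across episodes: Algorithm~1 augments the prior set via $D_{\xi}=\bigcup D_{i-1}$ and chooses the next query through the covariance-oriented policy (\ref{eqn:samplepolicy}), so the events $\{E_{i}\}$ are strongly coupled through the growing data history. This precludes any tighter product-form bound such as $\prod_{i}(1-\delta_{i})^{n_{i}}$, and forces one to rely only on the two distribution-free inequalities $\mathbb{P}(\bigcap_{i}E_{i})\leq \min_{i}\mathbb{P}(E_{i})$ and $\mathbb{P}(\bigcup_{i}E_{i})\geq \max_{i}\mathbb{P}(E_{i})$, which hold without any independence hypothesis. I would therefore present the ``$=$'' in (\ref{them:probROAS}) as the reported lower and upper confidence labels attached by Algorithm~1 to the coarse aggregates $\mathcal{L}^{*}_{\cup}$ and $\mathcal{L}^{*}_{\cap}$, rather than as the exact probability of the respective inclusion events, and I would recommend restating the result with ``$\geq$'' to make this interpretation explicit.
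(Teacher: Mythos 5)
Your intersection/$\max$ half is sound and is essentially a cleaned-up version of what the paper gestures at: from $E_i\subset\{\mathcal{L}^*_\cap\subset D\}$ you get $\mathbb{P}\{\mathcal{L}^*_\cap\subset D\}\geq\mathbb{P}(\bigcup_i E_i)\geq\max_i\mathbb{P}(E_i)\geq\max_i(1-\delta_i)^{n_i}$. The genuine gap is in the union/$\min$ half. You correctly identify $\{\mathcal{L}^*_\cup\subset D\}=\bigcap_i E_i$, but the two inequalities you then invoke point in opposite directions: $\mathbb{P}(\bigcap_i E_i)\leq\min_i\mathbb{P}(E_i)$ is an \emph{upper} bound, while Theorem~\ref{them:ROAGP} supplies only \emph{lower} bounds $\mathbb{P}(E_i)\geq(1-\delta_i)^{n_i}$. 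Chaining them yields no relation between $\mathbb{P}(\bigcap_i E_i)$ and $\min_i(1-\delta_i)^{n_i}$, so neither the stated equality nor your proposed replacement ``$\mathbb{P}\{\mathcal{L}^*_\cup\subset D\}\geq\min_i(1-\delta_i)^{n_i}$'' follows from your argument. Without independence, the only distribution-free lower bound on $\mathbb{P}(\bigcap_i E_i)$ is the Bonferroni one, $1-\sum_i\bigl(1-(1-\delta_i)^{n_i}\bigr)$, which is in general strictly smaller than the minimum; your own observation that the episodes are coupled through the growing data set $D_\xi$ is exactly why you cannot do better. To rescue a $\min$-type lower bound you would need independence across episodes or nesting of the events $E_i$, neither of which Algorithm~1 provides.

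For completeness: the paper's own proof does not close this gap either. It asserts the bounds without any inequality chain, and even attaches the ``minimum value'' to $\mathcal{L}^*_\cap$ while the theorem statement attaches $\min$ to $\mathcal{L}^*_\cup$. So your reading that ``$=$'' should be ``$\geq$'' matches the paper's evident intent, and your event-decomposition is more explicit than the published argument; but the union-side claim still needs to be weakened to the union-bound form, or an explicit independence assumption added, before either your proof or the paper's can be considered complete.
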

	\begin{proof}
		After executing the algorithm $K$ times, we could get optimal BCROAs $[\mathcal{L}^*_1,\mathcal{L}^*_2,\dots, \mathcal{L}^*_K]^{\mathrm{T}}$ with probability bounds $[[(1-\delta_1)^{n_1},1), [(1-\delta_2)^{n_2},1), \dots, [(1-\delta_K)^{n_K},1)]^{\mathrm{T}}$ toward the dynamical system (\ref{eqn:Appro_sys}). Since the increasing $K$ generates larger information set $D_\xi$, when $K\rightarrow \infty$, the GP model can be inferred with the richest $D_\xi$ such that the lower bound of this probabilistic range is approximating $1$. Let $\mathcal{L}^*_{\cap}$ be the intersection set of $\mathcal{L}^*_i$, which represents the most common part of these optimal BCROAs toward the exact dynamics of (\ref{eqn:Appro_sys}). Then, a conservative statement about $\mathcal{L}^*_{\cap}$ toward the real ROA $D$ in (\ref{eqn:Appro_sys}) establishes with a minimum value of $\{(1- \delta_{i})^{n_K}\}$. Therefore, the second equation in (\ref{them:probROAS}) about $\mathcal{L}^*_{\cup}$ toward $D$ establishes in a similar way, which completes the proof. 
	\end{proof}
	
	Theorem \ref{them:probROA} presents a probabilistic statement of $\mathcal{L}^*_i$ generated from Algorithm 1. Obviously, a higher degree mean function $m_{d_\xi}(x)$ in (\ref{eqn:gp_learned_system_2}) could approximate the real dynamics much more precisely. Note that the shape of $\mathcal{L}^*_i$ is various with uncertainty. It is worthy noting that the corresponding Lyapunov function $V(x)$ and the barrier function $h(x)$ could increase their degrees simultaneously for a better estimate. This is because a higher degree polynomial certified ROA can explore more areas, so the probabilistic bound narrows naturally.
	
	\section{Numerical Examples}
	Based on the Matlab toolbox of Chebfun, GPML, SOSOPT and Mosek solver, we estimate the BCROA of the autonomous partially unknown system by two examples.
	
	\subsection{Example 4: A 2D Nonlinear System}
	Extended to Example 3, we consider the system:
	\begin{eqnarray}\label{demo:2D}
		\begin{aligned}
			\begin{bmatrix} \dot{x}_1 \\ \dot{x}_2 \end{bmatrix} = 
			\begin{bmatrix}
				-x_{1}+x_{2}\\	x_{1}^2x_{2}+1-\sqrt{|\exp{(x_{1})}\cos{(x_{1})}|}
			\end{bmatrix}+
			\begin{bmatrix}
				0\\	d(x)
			\end{bmatrix}.
		\end{aligned}   
	\end{eqnarray}
	
	\noindent Example 3 shows the construction of the first learned dynamical system in Figure \ref{fig:fig_Comparison}(b) and the noise over the system measurement is bounded by $0.01$. The learned GP models are consisted by a $2^{nd}$ degree polynomial mean function zero, and a RBF kernel function $k(x,x^{\prime})$ with signal variances $\exp(0.1)$ and length scale $\exp(0.2)$. 
	
	At each episode, a given LCROA $\{x \vert \;V(x)\leq c_0 = 0.1\}$ is considered to the next step, where $V(x)=2x_{1}^4+0.5x_{2}^4-x_{1}^2x_{2}^2+x_{1}^2+x_{2}^2-0.5x_{1}x_{2}$ is a $4^{th}$ degree polynomial. Besides, we obtained the probability bounds with a fixed discounting factor $\sqrt{\beta}=2$ according to the method in \cite{jagtap2020control}. After $10$ episodes, the probability bounds of these optimal BCROAs is $\mathbb{P}\{\forall \mathcal{L}_{i\in[1,10]}^{*}, \mathcal{L}^*_i \subset D\}\in [0.9500,0.9700]$, where $D$ denotes the exact ROA in (\ref{demo:2D}).

	\begin{figure}[ht] 
		\centering
		\includegraphics[width=0.9\linewidth]{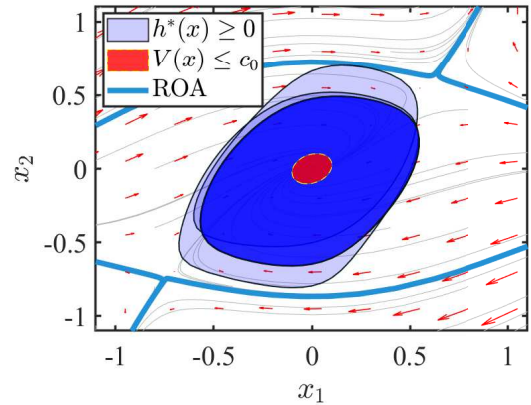}
		\caption{ROA estimate of (\ref{demo:2D}). The red region enclosed by the yellow dashed line depicts the LCROA and the blue region enclosed by the black solid line depicts the optimal BCROA. The gray lines and the red arrows depict the corresponding vector field of (\ref{demo:2D}) without unknown term $d(x)$. }
		\label{fig:fig2d_1}
	\end{figure}
	
	In Figure \ref{fig:fig2d_1}, each computed optimal BCROA is displayed with a fixed transparency. The intersection region of these optimal BCROAs is established with the probability $[0.9700,1.0000]$, while the union region of these optimal BCROAs is established with the probability $[0.9500,1.0000]$. Therefore, the result of these probabilistic optimal BCROAs in Figure \ref{fig:fig2d_1} is not only obviously larger than the LCROA, but also guaranteeing the system safety by a clear probability range.
	
	\subsection{Example 5: A 3D Nonlinear System}
	Consider a 3D nonlinear system corrupted with unknown terms $d_1(x), d_3(x)$ as the form of (\ref{eqn:sysstate}),
	\begin{small}
		\begin{eqnarray}\label{demo:3D}
			\begin{aligned}
				\begin{bmatrix} \dot{x}_1 \\ \dot{x}_2\\\dot{x}_3 \end{bmatrix} = 
				\begin{bmatrix}
					-x_{1}^2-\cos{(x_{1}^2)}\sin{(x_{1})}\\	
					-x_{2}-x_{1}^3x_{2}\\
					-x_{1}^2x_{3}+1-\sqrt{|\exp{(x_{1})}\cos{(x_{1})}|}
				\end{bmatrix}+
				\begin{bmatrix}
					d_1(x)\\0\\	d_3(x)
				\end{bmatrix}.
			\end{aligned}   
		\end{eqnarray}
	\end{small}
	
	\noindent The non-polynomial terms in (\ref{demo:3D}) are approximated by the $4^{th}$ degree Chebyshev interpolants in $[-2,2]^3$. To learn the system dynamics, we first collect the trajectory starting at $(-0.1, -0.1,0.1)$, where the noises over $d_1(x)$ and $d_3(x)$ are bounded by $0.01$. In this case, we use a $3^{rd}$ degree polynomial mean function and a RBF kernel with signal variances $\exp(0.01)$ and length scale $\exp(0.2)$.
	
	A sublevel set $\{x \vert \;V(x)\leq c_0 = 5\times 10^{-3}\}$ is given to construct the LCROA with a $4^{th}$ order $V(x)=10x_1^4+x_2^4+20x_3^4+2x_1^2x_2^2-4x_3^2x_2^2+3x_1^2x_3^2$. Set $\sqrt{\beta}=2$ such that the probability bounds of $10$ optimal BCROAs can be computed as $\mathbb{P}\{\forall \mathcal{L}_{i=1}^{10}, \; \mathcal{L}_i \subset D\}\in [0.8395,0.9567]$. 
	
	These optimal BCROAs are displayed in Figure \ref{fig:fig3D} with a fixed transparency. From Figure \ref{fig:fig3D}, we see that there are various differences of these optimal BCROAs. Furthermore, as seen in Table 1, with the same hyperparameters, the computation time of Example 5 is greater than Example 4 significantly. However, at the episode 10, Example 5 will first stop learning due to the high computation cost. Although higher dimensional functions find a comparable BCROA, but this is often not the case due to the fragile and changeable dynamics under uncertainty.
	
	\begin{figure}[ht] 
		\centering
		\includegraphics[width=0.95\linewidth]{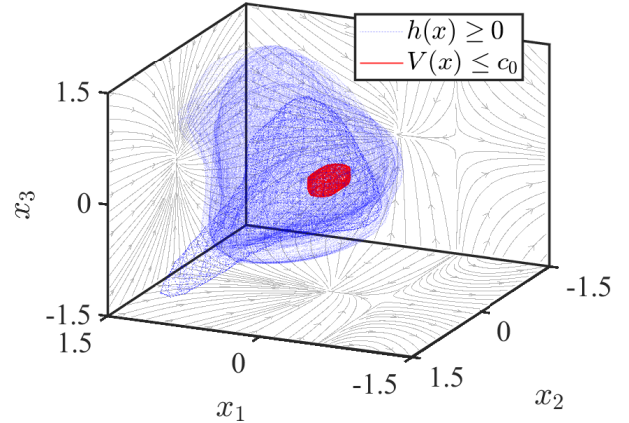}
		\caption{ROA estimates of (\ref{demo:3D}). The red cube enclosed by the solid lines depicts the LCROA and the blue cube enclosed by the dashed line denotes optimal BCROAs. The vector field is projected onto the plane.}
		\label{fig:fig3D}
	\end{figure}
	
	\begin{table}
		\centering
		\caption{Computation Time $t_c$ {\rm [sec]} of Algorithm 1.} \label{tab:test}
		\begin{tabular}{cccc}
			\hhline 
			&Episode 1 &Episode 5 &Episode 10\\ \hline
			Example 4 \; \vline& $35.6 $ & $168.3 $ & $1752.6 $ \\ 
			Example 5 \; \vline& $406.4$ & $837.8 $ & $1329.4$ \\ 
			\hhline
		\end{tabular}
	\end{table}
	
	\section{Conclusion}\label{sec:conclusions}
	In this work, a method is proposed to compute a barrier certified region of attraction such that the stability and safety can both be guaranteed for a class of partially unknown systems. The proposed method is built based on Chebyshev interpolants, Gaussian processes, sum-of-squares programming and a safe sample policy. The effectiveness of the proposed algorithm has been demonstrated via two numerical examples.
	
On the other hand, we would like to admit that a large amount of unknown terms may affect the accuracy and efficiency of proposed algorithm. Possible solutions to address this issue could be to construct an efficient sample policy \cite{umlauft2020smart} or to exploit local GP regression \cite{nguyen2009model}. Thus, future efforts will be devoted to prior data selection in further enlarging the estimated region of attraction for partially unknown systems \cite{wang2018permissive, umlauft2018uncertainty}.
	

\section*{Appendix: Proof of Proposition 1}
\begin{proof}
	First, Mercer's Theorem \cite{konig2013eigenvalue} allows us to represent a kernel by a kernel basis vector $\phi(\cdot)$. For this paper, we restrict our attention to the RBF kernel $k(x,x^{\prime})$, though our methodology can support more general kernels,
	\begin{small}
		\begin{equation}
			\label{eqn:GP_kernel}
			\begin{aligned}
				k(x,x^{\prime})&=\sigma^2\exp(\frac{(x-x^{\prime})^2}{-2l^2})\\
				&=\sum\limits_{i=0}^{\infty}\frac{\sigma^22^i}{2l^2i!}\exp(\frac{(x)^2}{-2l^2})(\frac{x}{2l^2})^i\exp(\frac{(x')^2}{-2l^2})(\frac{x'}{2l^2})^i\\
				&=\phi(x)^{\mathrm{T}}\phi(x'),
			\end{aligned}
		\end{equation}
	\end{small}
	
	\noindent where $\sigma$ denotes the signal variance, $l$ denotes the length scale and $\phi(x)$ denotes an infinite dimensional vector,
		\begin{equation}\small
			\label{eqn:GP_kernel_Basisfunction_2}
			\begin{aligned}
				\phi(x)=\exp(\frac{(x)^2}{-2l^2})\cdot(\sqrt{\frac{\sigma^22^0}{2l^20!}}(\frac{x}{2l^2})^0,\sqrt{\frac{\sigma^22^1}{2l^21!}}(\frac{x}{2l^2})^1,\dots).
			\end{aligned}
		\end{equation}
	Then, since $d(x)$ and its measurement $y(x)$ in (\ref{eqn:sysstate}) are bounded in the RKHS, we can formulate,
	\begin{equation}\small
		\label{eqn:GP_prior}
		\begin{aligned}
			d(x)= \phi(x)^{\mathrm{T}}w, \quad 	y(x) &= d(x)+\epsilon,
		\end{aligned}
	\end{equation}
	\noindent where $w$ denotes the weight vector and $\epsilon$ denotes the noise. Let $p(w)\sim\mathcal{N}(0,\sigma_p^2)$ be the prior distribution of $w$ and let $p(w\vert y, x)$ be the posterior distribution of $w$ based on $y$ and $x$, we can infer their relationship as follows,
	
	\begin{equation}\small
	\begin{small}\label{straincomponent}
		\text{posterior} = \frac{\text{likelihood}\times \text{prior}}{\text{marginal\;likelihood}}, p{(w \vert y, x)}= \frac{p(y \vert x, w)\cdot p(w)}{p(y\vert x)}.
	\end{small}
	\end{equation}
	
	\noindent Next, let $K_1$ be the weight-independent marginal likelihood $p(y\vert x)$. GP is relocating likelihood $p(y \vert x, w)$ with measurements $[(x_1,y_1), (x_2,y_2), \dots, (x_k, y_k)]^{\mathrm{T}}$ as below,
	\begin{equation}\small
		\label{eqn:likelihood}
		\begin{aligned}
			p(y \vert x, w) &= \prod_{i=1}^{k} p \left( y_{i}, x_i, w\right) = \prod^{k}_{i=1}\frac{\exp{(\frac{(y_i - \phi(x_i)^{\mathrm{T}} w)^2}{-2\sigma_n^2})}}{\sigma_n\sqrt{2\pi}}, \\
		\end{aligned}
	\end{equation}
	\noindent to compute the $p{(w \vert y, x)}$ based on (\ref{straincomponent}) and (\ref{eqn:likelihood}),
	\begin{equation}\small
		\label{eqn:posterior_weights}
		\begin{aligned}
			p(w \vert x, y) 
			& = \frac{\prod\limits_{i=1}^{k} p \left( y_{i}, x_i, w\right) \cdot \frac{1}{\sqrt{2\pi \sigma^2_p}}\exp{(\frac{w^{\mathrm{T}}\sigma_p^{-2}w}{-2})}}{K_{1}}\\
			& = \!K_2 \cdot \exp{(\frac{w^{\mathrm{T}}\sigma_p^{-2}w}{-2})}\cdot \exp{(\frac{\sum\limits_{i=1}^k(y_i -\phi(x_i)^{\mathrm{T}}w)^2}{-2\sigma_n^2})}, \\
		\end{aligned}
	\end{equation}
	\noindent where $K_2=\frac{1}{K_1\cdot \sqrt{2\pi\sigma_p^2}\cdot (\sigma_n\sqrt{2\pi})^{k}}$. 
	
	By using the Maximum a Posterior estimation, the exact value of the optimal weight $\hat{w}$ can be computed as,
	\begin{small}
		\begin{equation}
			\label{eqn:optimize}
			\begin{aligned}
				\hat{w} &= \arg \max_{w}\; p(w\vert x, y) = \arg \max_{w} \log{ p(w\vert x, y)}\\
				& = \arg \min_{w}\; -\log{ p(w\vert x, y)}\\
				& = \arg \min_{w}\; -\log{K_2} +\frac{w^{\mathrm{T}}\sigma_p^{-2}w}{2}+\frac{\sum\limits_{i=1}^k(y_i -\phi(x_i)^{\mathrm{T}}w)^2}{2\sigma_n^2}\\
				& = \arg \min_{w}\; w^{\mathrm{T}}(\sigma_n^2\sigma_p^{-2})w+\sum\limits_{i=1}^{k}(y_i -\phi(x_i)^{\mathrm{T}}w)^2\\
				& = \arg \min_{w}\; w^{\mathrm{T}}(\phi(x)^{\mathrm{T}}\phi(x)+\sigma_n^2\sigma_p^{-2})w-2w^{\mathrm{T}}\phi(x)^{\mathrm{T}}y+y^{\mathrm{T}}y\\
				& = \arg \min_{w}\; L(w).
			\end{aligned}
		\end{equation}	
	\end{small}
	
	The derivative of $L(w)$ in (\ref{eqn:optimize}) can be computed as,
	\begin{equation}\small
		\label{eqn:optimize_derivative}
		\begin{aligned}
			\frac{\partial L(w)}{\partial w}&=2(\phi(x)^{\mathrm{T}}\phi(x)+\sigma_n^2\sigma_p^{-2})w-2\phi(x)^{\mathrm{T}}y,
		\end{aligned}
	\end{equation}	
	\noindent and the optimal weight $\hat{w}$ can be parameterized directly,
	\begin{equation}
		\label{eqn:optimize_w}\small
		\begin{aligned}
			\hat{w}&=(\phi(x)^{\mathrm{T}}\phi(x)+\sigma_n^2\sigma_p^{-2})^{-1}\phi(x)^{\mathrm{T}}y\\
			&=\phi(x)^{\mathrm{T}}(\phi(x)\phi(x)^{\mathrm{T}}+\sigma_n^2\sigma_p^{-2})^{-1}y\\
			&=\sigma_p^2\phi(x)^{\mathrm{T}}(\phi(x)\sigma_p^2\phi(x)^{\mathrm{T}}+\sigma_n^2)^{-1}y,
		\end{aligned}
	\end{equation}	

	\noindent
    Thus, we can obtain the optimal $p(\hat{w}\vert x,y)$ as (\ref{eqn:posterior_weights}) shows,
	\begin{equation}\small
		\label{eqn:optimize_w2}
		\begin{aligned}
			p(\hat{w}\vert x,y)\sim(&\sigma_p^2\phi(x)^{\mathrm{T}}(\phi(x)\sigma_p^2\phi(x)^{\mathrm{T}}+\sigma_n^2)^{-1}y,\\
			&(\phi(x)\sigma_p^2\phi(x)^{\mathrm{T}}+\sigma_n^2)^{-1}),
		\end{aligned}
	\end{equation}	
	
    \noindent where the predictive output $d(x_*)$ of the query states $x_*$ is,
	\begin{equation}\small
		\label{eqn:predictive_output}
		\begin{aligned}
 			d(x_*) &=\phi(x_*)^{\mathrm{T}}\hat{w}.
		\end{aligned}
	\end{equation}	
	
    Because $\phi(\cdot)$ is an infinite vector, which is used to approximate $y$ with optimal weight $\hat{w}$. To avoid the loss of generality, we can truncate $\phi(x)$ by a finite monomial vector $\varphi(x)$ without any impact of the inference processes from (\ref{eqn:likelihood}) to (\ref{eqn:optimize_w2}). This will reshape the mean function in the polynomial form and adjust the second term in (\ref{eqn:optimize_w2}) as,
	
	\begin{equation}\small
		\label{eqn:optimize_output_distribution_mean2}
		\begin{aligned}
			m(x_*) &= \varphi(x_*)^{\mathrm{T}} \hat{w}, \\
			\sigma^2(x_*) &= k(x_*, x_*)-k_*^{\mathrm{T}}(K+\sigma_n^2I)^{-1}k_*.
		\end{aligned}
	\end{equation}	
	\noindent where $[K]_{(i,j)}=k(x_i, x_j)$ is a kernel Gramian matrix and $k_*=[k(x_1,x_*), k(x_2,x_*), \dots, k(x_w, x_*)]^{\mathrm{T}}$. Thus, Proposition \ref{prop:polynomial mean function} is obtained, which completes the proof.
\end{proof}
	
\end{document}